\newtheorem{thm}{Theorem}[section]
\newtheorem{cor}[thm]{Corollary}
\newtheorem{lemma}[thm]{Lemma}
\newtheorem{prop}[thm]{Proposition}
\theoremstyle{definition}
\newtheorem{defn}[thm]{Definition}
\theoremstyle{remark}
\newtheorem{example}[thm]{Example}
\renewcommand{\a}{\alpha}
\renewcommand{\b}{\beta}
\newcommand{\ga}{\gamma}
\newcommand{\Ga}[1]{\Gamma_{#1}}
\newcommand{\PB}{\left\{\cdot\,,\cdot\right\}}
\newcommand{\pB}[1]{\left\{#1,\cdot\right\}}
\newcommand{\Pb}[1]{\left\{\cdot\,,#1\right\}}
\newcommand{\pb}[1]{\left\{#1\right\}}
\renewcommand{\(}{\left(}
\renewcommand{\)}{\right)}
\newcommand{\set}[1]{\left\{#1\right\}}
\newcommand{\cL}{\mathcal L}
\newcommand{\cX}{\mathcal X}
\newcommand{\X}{ \cX}
\newcommand{\bbC}{\mathbb C}
\newcommand{\bbF}{\mathbb F}
\newcommand{\bbN}{\mathbb N}
\newcommand{\bbR}{\mathbb R}
\newcommand{\KP}{KP property}
\newcommand{\tP}{\tilde P}
\newcommand{\tx}[1]{\tilde x_#1}
\newcommand{\ty}[2]{\bar y_{#1,#2}}
\newcommand{\cycl}{\circlearrowleft}
\newcommand{\LV}{\mathop{\rm LV}\nolimits}
\newcommand{\leqs}{\leqslant}
\newcommand{\Rk}{\hbox{Rk\,}}
\newcommand{\wght}{\varpi}
\newcommand{\Gb}{\Delta}
\newcommand{\PBb}{\PB_b}
\newcommand{\Pbb}[1]{\Pb{#1}_b}
\newcommand{\pbb}[1]{\pb{#1}_b}
\newcommand{\Kb}{K_b}
\newif\ifprivate
 \numberwithin{equation}{section}
\def\???{\ifprivate {\bf {???}} \marginpar{{\Huge {\bf ?}}}\else \fi}
\numberwithin{equation}{section}
\begin{document}

\nocite{*}

\title[Kahan discretization and Poisson maps]{Kahan discretizations of skew-symmetric Lotka-Volterra systems and
   Poisson maps}

\author[C. A. Evripidou]{C. A. Evripidou}
\address{Charalampos Evripidou, Department of Mathematics and Statistics, University of Cyprus, P.O.~Box 20537,
  1678 Nicosia, Cyprus,\newline
  Department of Mathematics and Statistics, La Trobe University, Melbourne, Victoria 3086, Australia}
\email{evripidou.charalambos@ucy.ac.cy}

\author[P. Kassotakis]{P. Kassotakis}
\address{Pavlos Kassotakis, Department of Mathematics and Statistics, University of Cyprus, P.O.~Box 20537, 1678
  Nicosia, Cyprus} \email{pavlos1978@gmail.com}

\author[P. Vanhaecke]{P. Vanhaecke}
\address{Pol Vanhaecke, Laboratoire de Math\'ematiques et Applications, UMR 7348 CNRS, Universit\'e de Poitiers, 11
  Boulevard Marie et Pierre Curie, Téléport 2 - BP 30179, 86 962 Chasseneuil Futuroscope Cedex,
  France}\email{pol.vanhaecke@math.univ-poitiers.fr}

\date{\today}
\subjclass[2000]{53D17, 37J35}

\keywords{Lotka-Volterra systems, graphs, integrability}

\begin{abstract}
The Kahan discretization of the Lotka-Volterra system, associated with any skew-symmetric graph $\Gamma$, leads to
a family of rational maps, parametrized by the step size. When these maps are Poisson maps with respect to the
quadratic Poisson structure of the Lotka-Volterra system, we say that the graph $\Gamma$ has the Kahan-Poisson
property. We show that if $\Gamma$ is connected, it has the Kahan-Poisson property if and only if it is a cloning
of a graph with vertices $1,2,\dots,n$, with an arc $i\to j$ precisely when $i<j$, and with all arcs having the
same value. We also prove a similar result for augmented graphs, which correspond with deformed Lotka-Volterra
systems and show that the obtained Lotka-Volterra systems and their Kahan discretizations are superintegrable as
well as Liouville integrable.
\end{abstract}

\maketitle

\setcounter{tocdepth}{1}

\tableofcontents

\section{Introduction}
With any complex skew-symmetric $n\times n$ matrix $A=(a_{i,j})$ is associated a (skew-symmetric) Lotka-Volterra
system, which is the Hamiltonian system, described in terms of the standard coordinates $x_1,\dots,x_n$ of $\bbC^n$
by the following system of differential equations:
\begin{equation}\label{eq:LV_gen_intro}
  \dot x_i = \sum_{j=1}^n a_{i,j} x_i x_j\;, \ \ i=1,2, \dots , n \; .
\end{equation}
The Hamiltonian structure, which is also determined by $A$, is defined by the basic Poisson brackets
$\pb{x_i,x_j}=a_{i,j}x_ix_j$, for $1\leqslant i,j\leqslant n$, with Hamiltonian $H:=x_1+x_2+\cdots+x_n$. The matrix
$A$ may be viewed as the adjacency matrix of a (skew-symmetric) graph $\Gamma$, having the integers $1,2,\dots,n$
as vertices and with an arc from $i$ to $j$ with value $a_{i,j}$ when $a_{i,j}\neq0$ and~$i<j$. We often think of
the Lotka-Volterra system as being associated with $\Gamma$ and denote it by $\LV(\Gamma)$. Notice that $\Gamma$ is
determined, up to isomorphism, by $\LV(\Gamma)$, as was shown in \cite{PPP_graphs}. When the entries of $A$ are all
real, one may also consider~\eqref{eq:LV_gen_intro} on $\bbR^n$, so in what follows we suppose that the entries of
$A$ belong to $\bbF$, where $\bbF$ stands for $\bbR$ or $\bbC$ and we consider \eqref{eq:LV_gen_intro} on $\bbF^n$.

For a system of quadratic differential equations, such as \eqref{eq:LV_gen_intro}, a natural discretization has
been constructed by Kahan \cite{kahan}, leading to a rational map, called its Kahan map (see also
\cite{HK,HK2}). Applied to \eqref{eq:LV_gen_intro}, the Kahan map with step size $\varepsilon\in\bbF^*$ is the
rational map, corresponding to the field automorphism~\ $\widetilde{}\;:\bbF(x_1,\dots,x_n)\to
\bbF(x_1,\dots,x_n)$, defined by the following formulas, where the right hand side has been obtained from
\eqref{eq:LV_gen_intro} by polarization:

\begin{equation*}
  \frac{\tx i -x_i}{\varepsilon} = \sum_{j=1}^n a_{i,j} \(\tx i x_j+x_i\tx j\)\;, \ \ i=1,2, \dots , n \; .
\end{equation*}

From the point of view of discrete integrability, a natural question is whether the Kahan map is a Poisson map with
respect to the above Poisson structure. In formulas, this means that $\pb{\tx i,\tx j}=a_{i,j}\tx i\tx j$, for
$1\leqslant i<j\leqslant n.$ When this is the case, we say that $\Gamma$ has the Kahan-Poisson property. A general
skew-symmetric graph $\Gamma$ does not have the Kahan-Poisson property. In order to give a more precise answer, let
us denote by $\Ga n$ the skew-symmetric graph with vertices $1,\dots,n$, and with an arc $i\to j$, with value $1$,
for any $i<j$; for $\gamma\in\bbF^*$ we denote by $\gamma\Ga n$ the graph with the same vertices and arcs as $\Ga
n$, but where the value of every arc is $\gamma$.
\begin{thm}\label{thm:main}
  Let $\Gamma$ be a connected skew-symmetric graph. Then $\Gamma$ has the Kahan-Poisson property if and only if
  $\Gamma$ is isomorphic to $\gamma\Ga n^\wght$ for some $n\in\bbN^*$, some $\gamma\in \bbF^*$ and some weight
  vector $\wght$ for $\Ga n$.
\end{thm}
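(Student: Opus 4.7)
I will prove the two implications separately, with the forward (``only if'') direction carrying the bulk of the work.

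For the ``if'' direction, I plan to first verify that $\gamma\Ga n$ itself has the Kahan-Poisson property, and then that cloning preserves it. For $\gamma\Ga n$, the Kahan equations take the structured form
\[
  \tx i\(1-\varepsilon\gamma\sum_{j>i}x_j+\varepsilon\gamma\sum_{j<i}x_j\)=x_i\(1+\varepsilon\gamma\sum_{j>i}\tx j-\varepsilon\gamma\sum_{j<i}\tx j\),
\]
whose symmetry permits a direct check of $\pb{\tx i,\tx j}=\gamma\tx i\tx j$ for $i<j$. Cloning then preserves the property because the equations for clones $v_1,\dots,v_r$ of a vertex $v$ are obtained from the equation for $v$ by the symmetric splitting $x_v\rightsquigarrow x_{v_1}+\cdots+x_{v_r}$, so that all bracket relations among cloned and uncloned coordinates lift componentwise; an induction on $|\wght|$ then delivers $\gamma\Ga n^\wght$.

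For the harder ``only if'' direction, assume $\Gamma$ has the Kahan-Poisson property and translate $\pb{\tx i,\tx j}=a_{i,j}\tx i\tx j$ into a polynomial identity in $x_1,\dots,x_n$ and $\varepsilon$. Differentiating the implicit Kahan equations in the $x_k$'s yields a linear system for the Jacobian $\p\tx i/\p x_k$ whose solution, substituted into the quadratic Poisson bracket, gives a rational expression for $\pb{\tx i,\tx j}$; equating it to $a_{i,j}\tx i\tx j$ and clearing denominators produces the desired identity. Expanding in $\varepsilon$: the orders $\varepsilon^0$ and $\varepsilon^1$ are automatic (the latter being precisely the Hamiltonianity of the Lotka-Volterra flow with respect to the given bracket). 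The first nontrivial constraints appear at order $\varepsilon^2$ and involve cubic monomials in $x$; their coefficients should produce, for every triple of vertices $(i,j,k)$, an algebraic relation among $a_{i,j},a_{j,k},a_{i,k}$ that forces any two adjacent arcs to carry the same value and rules out open triangles $i\to j\to k$ lacking an arc between $i$ and $k$.

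Combined with connectedness of $\Gamma$, these two conclusions force a common scalar $\gamma\in\bbF^*$ with every nonzero arc equal to $\pm\gamma$, and the triangle-closure property implies that the equivalence relation $v\sim w\iff a_{v,k}=a_{w,k}$ for all $k\notin\set{v,w}$ and $a_{v,w}=0$ identifies $\Gamma/{\sim}$ with a transitive tournament, necessarily $\Ga n$ for $n$ the number of classes. Hence $\Gamma\cong\gamma\Ga n^\wght$ with $\wght$ recording the class sizes. The main obstacle is the order-$\varepsilon^2$ analysis of the polynomial identity: disentangling the \emph{scalar} constraint (adjacent arcs carry equal values) from the \emph{combinatorial} constraint (triangles close up), while simultaneously handling the cases of zero and nonzero arc values so as to avoid spurious degeneracy conditions, is the most delicate step and likely requires a careful case analysis on the induced subgraphs of $\Gamma$.
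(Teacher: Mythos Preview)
Your ``if'' direction is sound and matches the paper (Propositions~\ref{prp:LVn0_KP} and~\ref{prp:kahan_cloning}).

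The ``only if'' direction has a genuine gap. You claim that the order-$\varepsilon^2$ coefficients of the Poisson condition produce, for each triple $(i,j,k)$, relations forcing (a) adjacent arcs to carry the same value and (b) every directed $2$-path to close up, and that these two principles alone make the decloned graph $\Gamma/{\sim}$ a transitive tournament. That implication is false. Take the irreducible $4$-vertex graph with arcs $a\to b$, $a\to v$, $b\to v$, $c\to v$, all of value~$1$: every one of its $3$-vertex induced subgraphs has the \KP, so it satisfies whatever triple-wise constraints the KP condition can impose, and in particular its sole directed $2$-path $a\to b\to v$ is closed by $a\to v$. Yet the graph is not a tournament ($c$ is joined to neither $a$ nor $b$) and, by direct computation, does \emph{not} have the \KP. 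Three-vertex constraints alone therefore cannot force the quotient to be a tournament, and your passage from ``every triple behaves'' to ``$\Gamma/{\sim}\simeq\Gamma_n$'' is a non-sequitur.

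The paper confronts this obstruction head-on. After the $3$-vertex classification it enumerates the connected irreducible $4$-vertex graphs all of whose $3$-vertex induced subgraphs have the \KP, finds that besides $\Gamma_4$ there are exactly two (the graph above and one other), and eliminates each by an explicit bracket computation at a single well-chosen point of $\bbF^4$. Only with this $4$-vertex step in hand does the argument close: any connected irreducible non-tournament on $n\geqslant5$ vertices is shown to contain one of the two bad $4$-vertex graphs as an induced subgraph, and the \KP\ is inherited by induced subgraphs. Your $\varepsilon$-expansion could perhaps be repaired by extracting genuine $4$-vertex constraints---either from degree-$4$ monomials at order $\varepsilon^2$ that you have not analysed, or from order $\varepsilon^3$---but you would then have to verify that they exclude precisely these two exceptional graphs, which is essentially the computation the paper carries out by other means.
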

For a skew-symmetric graph which is not connected, Theorem \ref{thm:main} applies to each one of its connected
components.

The notions of cloning and decloning of skew-symmetric graphs and Lotka-Volterra systems were introduced in
\cite{PPP_graphs} in the study of morphisms and automorphisms of graphs and Lotka-Volterra systems. For a graph,
decloning amounts to identifying two (or more) vertices when they have the same neighborhood (which means that the
corresponding lines of its adjacency matrix $A$ are identical); the quotient graph is then said to be
irreducible. With this terminology, the theorem can also be stated by saying that the only connected, irreducible
skew-symmetric graphs $\Gamma$ which have the Kahan-Poisson property are the graphs $\gamma\Ga n$, where
$n\in\bbN^*$ and $\gamma\in \bbF^*$.

The fact that $\Ga n$ (and hence $\gamma\Ga n$, for all $\gamma$) has the Kahan-Poisson property has already been
shown in \cite[Prop.\ 3.8]{KKQTV}. We show in Proposition~\ref{prp:kahan_cloning} that the Kahan-Poisson property
is preserved by cloning and decloning. This proves the easier, inverse implication in Theorem~\ref{thm:main}. The
main result which we prove in this paper is the direct implication, which we first show in dimension~3
(Section~\ref{sec:3dim}), then in dimension 4, using the result in dimension~3 (Section~\ref{par:dim4}), and
finally in dimension $n>4$, using the result in dimension~4 (Section~\ref{par:dimn}).

In Section \ref{sec:LV_deformed}, we prove the following generalization of Theorem~\ref{thm:main} to augmented
graphs, which correspond to deformed Lotka-Volterra systems (see Section \ref{sec:LV_deformed} for the definition
of an augmented graph and the \KP\ for such graphs):
\begin{thm}\label{thm:main_deformed}
  Let $\Delta$ be an augmented graph of a connected skew-symmetric graph $\Gamma$. Then $\Delta$ has the
  Kahan-Poisson property if and only if $\Delta$ is isomorphic to an augmented graph of $\ga\Ga n^\wght$ for some
  $n\in\bbN^*$, some $\gamma\in \bbF^*$ and some weight vector $\wght$ for $\Ga n$.
\end{thm}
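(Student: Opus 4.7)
The plan is to bootstrap Theorem \ref{thm:main_deformed} from Theorem \ref{thm:main} by peeling off the augmentation. An augmented graph $\Delta$ of $\Gamma$ packages $\Gamma$ together with additional deformation data producing the lower-order terms of the associated deformed Lotka-Volterra system; I expect the Kahan-Poisson condition for $\Delta$ to decouple into a condition on $\Gamma$ itself and a compatibility condition linking $\Gamma$ with the deformation data.

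For the (easy) reverse direction, I would verify directly that any augmented graph of $\ga\Ga n^\wght$ has the Kahan-Poisson property. The undeformed case is covered by \cite[Prop.\ 3.8]{KKQTV} together with Proposition \ref{prp:kahan_cloning}; the augmented case should then follow from a cloning-type argument showing that augmentation preserves the Kahan-Poisson property once the underlying graph does, together with a short direct computation for the contribution of the augmentation vertex (or vertices) to the polarized system.

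For the forward direction, assume $\Delta$ has the Kahan-Poisson property. The relations $\pb{\tx i,\tx j}=a_{i,j}\tx i\tx j$ for the indices $i,j$ of the original graph $\Gamma$ involve both the quadratic and the deformation terms; by scaling the deformation data by a parameter and extracting the leading part (or, equivalently, by restricting the Kahan relations to a suitable subsystem where the deformation contributions are switched off), one should obtain that $\Gamma$ itself has the Kahan-Poisson property. Theorem \ref{thm:main} then forces $\Gamma\cong\ga\Ga n^\wght$ for some $n,\ga$ and weight $\wght$. It then remains to show that the augmentation data compatible with $\Gamma$ is exactly that coming from an augmented graph of $\ga\Ga n^\wght$, which should follow from the residual compatibility equations relating the augmentation vertex to the now-identified underlying graph.

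The main obstacle I anticipate lies in this final step: the polarization of the lower-order terms generates cross-terms mixing $\tx i$ and $x_j$ of different orders in the deformation, and the resulting compatibility equations between the augmentation data and the structure of $\ga\Ga n^\wght$ will likely require a careful low-dimensional base case, in the spirit of Section \ref{sec:3dim} and Section \ref{par:dim4}, before being lifted to arbitrary $n$ in the manner of Section \ref{par:dimn}. This is needed to rule out spurious augmentations that could a priori satisfy the Kahan-Poisson condition without arising from a genuine augmentation of $\ga\Ga n^\wght$.
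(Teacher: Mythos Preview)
Your overall strategy matches the paper's: prove the equivalence ``$\Delta$ has the \KP'' $\Leftrightarrow$ ``$\Gamma$ has the \KP'' (this is Proposition~\ref{prp:kahan_deformed}), and then invoke Theorem~\ref{thm:main}. For the forward direction your idea of ``scaling the deformation data and extracting the leading part'' is exactly what the paper does: one simply sets every $b_{i,j}=0$ in the identity $\pbb{\Kb(x_i),\Kb(x_j)}=a_{i,j}\Kb(x_i)\Kb(x_j)+b_{i,j}$, observing that $\Kb\vert_{b=0}=K$ and $\PBb\vert_{b=0}=\PB$. For the reverse direction the paper combines \cite[Theorem~5.8]{PPP_Red_DC} (the deformed $\Ga n$ case) with the cloning argument of Proposition~\ref{prp:kahan_cloning_defo}, just as you outline.

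Where your proposal goes astray is in the anticipated ``main obstacle''. You expect residual compatibility equations between the augmentation data and $\gamma\Ga n^\wght$ that would require a separate low-dimensional analysis. There are none. An augmented graph is by definition a triple $\Delta=(S,A,B)$ with $B$ a deformation matrix of $A$; the only constraint on $B$ is the Jacobi condition~\eqref{eq:comp_cond}, which is already part of the hypothesis. Once Theorem~\ref{thm:main} identifies $\Gamma=(S,A)$ with $\gamma\Ga n^\wght$ via some isomorphism, transporting $B$ through the same isomorphism yields a deformation matrix of $\gamma\Ga n^\wght$, so $\Delta$ is \emph{tautologically} an augmented graph of $\gamma\Ga n^\wght$. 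No further case analysis is needed.

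A related point: your references to ``the augmentation vertex (or vertices)'' suggest you may be picturing augmentation as adding vertices to $\Gamma$. It is not. The augmentation is the extra skew-symmetric matrix $B$, which deforms the Poisson bracket by constants and adds constant terms $c_i=\sum_j b_{i,j}$ to the vector field; the vertex set and adjacency matrix are unchanged. With this in mind, the phantom obstacle disappears and the proof is exactly as short as the paper's.
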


We show in Section \ref{sec:integ} that the (deformed) Lotka-Volterra systems, corresponding to the (augmented)
graphs which appear in Theorems \ref{thm:main} and \ref{thm:main_deformed}, have a discretization with good
integrability properties, namely the Kahan discretization, which is in these cases a Poisson map with respect to
the original Poisson structure, is both superintegrable and Liouville integrable. It follows that the (deformed)
Lotka-Volterra systems whose Kahan discretization is integrable with respect to their original Poisson structure
are characterized by the Kahan-Poisson property.

\section{The Kahan-Poisson property}
In this section we introduce the Kahan-Poisson property for skew-symmetric graphs and establish it for a
particular family of such graphs. We first recall the basic facts which we will use about the Kahan discretization of
systems of quadratic differential equations and about skew-symmetric Lotka-Volterra systems; see
\cite{PPP_graphs,kahan} for details.

\subsection{The Kahan discretization} Consider a system of differential equations on $\bbF^n$:
\begin{equation}\label{eq:quadratic_system}
  \dot x_i = Q_i(x)\;, \ \ i=1,2, \dots , n \; .
\end{equation}
Here, $x=(x_1,\dots,x_n)$ and $Q_i$ is assumed to be a quadratic form, whose corresponding symmetric bilinear form
is denoted by $B_i$, so that $Q_i(x)=B_i(x,x)$. The \emph{Kahan discretization} of \eqref{eq:quadratic_system} is
by definition given by
\begin{equation}\label{eq:kahan_general}
  \frac{\tilde x_i-x_i}{2\varepsilon} = B_i(\tilde x,x)\;, \quad i=1,2, \dots , n \; ,
\end{equation}
where $\varepsilon\in\bbF^*$ is a non-zero parameter, the \emph{step size}.  When (linearly!) solved for $\tilde
x_1,\dots,\tilde x_n$, one gets a family of birational maps from $\bbF^n$ to itself, parametrized by
$\varepsilon\in\bbF^*$. Thinking of $\varepsilon$ as being fixed, it is called the \emph{Kahan map}. We will mostly
work with the corresponding endomorphism~$K$ of the field of rational functions $\bbF(x)=\bbF(x_1,x_2,\dots,x_n)$,
defined by $K(x_i):=\tilde x_i$, for $i=1,\dots,n$; we call it the \emph{Kahan morphism}.

\subsection{Lotka-Volterra systems}
We are interested in the Kahan discretization of skew-symmetric Lotka-Volterra systems. As recalled in the
introduction, a Lotka-Volterra system is associated with any skew-symmetric $n\times n$ matrix $A$; we also view
$A=(a_{i,j})$ as the adjacency matrix of a skew-symmetric graph $\Gamma=(S,A)$, with vertex set
$S=\set{1,2,\dots,n}$, and think of the Lotka-Volterra system as being associated with the graph $\Gamma$, denoted
$\LV(\Gamma)$. The Poisson structure of~$\LV(\Gamma)$, which we consider here as a Poisson bracket on $\bbF(x)$, is
the quadratic bracket, given by $\pb{x_i,x_j}=a_{i,j}x_ix_j$, for $i,j=1,2,\dots,n$. It makes $(\bbF(x),\PB)$ into
a Poisson field. The Hamiltonian vector field~$\X_H=\Pb{H}$ on $\bbF^n$, associated with the Hamiltonian
$H:=x_1+x_2+\dots+x_n$, is given by the following quadratic differential equations:
\begin{equation*}
  \dot x_i = \sum_{j=1}^n a_{i,j} x_i x_j\;, \ \ i=1,2, \dots , n \; .
\end{equation*}
Its Kahan map is defined by the following specialisation of~\eqref{eq:kahan_general}:
\begin{equation}\label{eq:LV_Kahan}
  \frac{\tx i -x_i}{\varepsilon} = \tx i\sum_{j=1}^n a_{i,j}x_j +x_i\sum_{j=1}^n a_{i,j}\tx j\;, \ \
  i=1,2, \dots , n \; .
\end{equation}

An important example for this paper is the Lotka-Volterra system $\LV(\Ga n)$, whose underlying graph $\Ga n$ has
vertices $1,2,\dots,n$ and has an arc from $i$ to $j$ with value $1$ when $i<j$ (see Figure~\ref{fig:Gamma_n}).

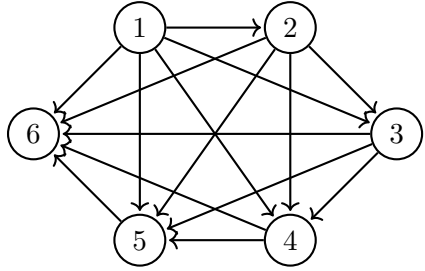
\begin{figure}[h]
  \begin{tikzpicture}[->,shorten >=1pt,auto,node distance=2cm,
                thick,main node/.style={circle,draw,font=\bfseries}]
  \node[main node] (1) {$1$};
  \node[main node] (2) [right of=1] {$2$};
  \node[main node] (3) [below right of=2] {$3$};
  \node[main node] (4) [below left of=3] {$4$};
  \node[main node] (5) [left of=4] {$5$};
  \node[main node] (6) [below left of=1] {$6$};
  \path
    (1) edge (2) edge(3) edge (4) edge (5) edge (6)
    (2) edge (3) edge (4) edge (5) edge (6)
    (3) edge (4) edge (5) edge (6)
    (4) edge (5) edge (6)
    (5) edge (6);
\end{tikzpicture}
\caption{The graph $\Ga n$ corresponds to the skew-symmetric $n\times n$ matrix whose upper-trangular entries
  are all equal to~$1$. The pictured graph is $\Ga 6$.\label{fig:Gamma_n}}
\end{figure}

\subsection{The Kahan-Poisson property}
The property of the Kahan map in which we are interested in this paper is its preservation of the Poisson structure
of the associated Lotka-Volterra system, i.e., that the Kahan map, which is a birational automorphism of $\bbF^n$,
is also a Poisson map (and hence a birational Poisson automorphism). Before giving the definition, let us clarify
the independence on $\varepsilon$: when the Kahan map is a Poisson map for some value of $\varepsilon$ then it is a
Poisson map for all values of $\varepsilon$. Indeed, the defining equations \eqref{eq:LV_Kahan} of the Kahan map
are homogeneous when $\varepsilon$ is given weight $-1$, while giving a weight $1$ to all $x_i$; the claim then
follows from the fact that homotheties of quadratic Poisson structures are Poisson maps (see \cite[Proposition
  8.16]{PLV}).  It also shows that if the Kahan map of a Lotka-Volterra system $\LV(\Gamma)$ is a Poisson map then
the Kahan map of the Lotka-Volterra system $\LV(\gamma\Gamma)$, where $\gamma\Gamma$ is $\Gamma$ with all of its
values scaled by $\gamma\in\bbF^*$, is also a Poisson map. We therefore set $\varepsilon =1$ and when we speak of
the Kahan map or the Kahan morphism of a Lotka-Volterra system, it is implictly assumed that
$\varepsilon=1$. Notice that when $\gamma=-1$, then $\gamma\Gamma=-\Gamma$ is the graph $\Gamma$ with the direction
of all its arcs reversed.

\begin{defn}%
  Let $\Gamma$ be a skew-symmetric graph. Then $\Gamma$ is said to have the \emph{Kahan-Poisson property} (or
  \emph{\KP}) if the Kahan map of its associated Lotka-Volterra system $\LV(\Gamma)$ is a Poisson map.
\end{defn}
In algebraic terms, this means that the Kahan morphism $K$ is an automorphism of Poisson fields, i.e.,
$\pb{K(x_i),K(x_j)}=K\pb{x_i,x_j}$, for all $i,j=1,\dots,n$, which can also be written as $\pb{\tx i,\tx
  j}=\widetilde{\pb{x_i,x_j}}$.

A first family of skew-symmetric graphs which have the \KP\ is given by the following proposition:
\begin{prop}\cite{KKQTV}\label{prp:LVn0_KP}
  For any $n\in\bbN^*$, the graph $\Ga n$ has the \KP.
\end{prop}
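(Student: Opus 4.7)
The plan is to verify the Kahan-Poisson property for $\Ga n$ by a matrix-form calculation. Write $X:=\mathrm{diag}(x_1,\dots,x_n)$, $\tilde X:=\mathrm{diag}(\tilde x_1,\dots,\tilde x_n)$, and let $A=(a_{i,j})$ denote the $\pm 1$ skew-symmetric adjacency matrix of $\Ga n$. With $\varepsilon=1$, the Kahan equations~\eqref{eq:LV_Kahan} rearrange into the linear system $M^-(x)\tilde x=x$, where $M^-(x):=\mathrm{diag}(\mathbf 1-Ax)-XA$. Setting $M^+(\tilde x):=\mathrm{diag}(\mathbf 1+A\tilde x)+\tilde X A$, the diagonal Kahan relations $\tilde x_i(1-(Ax)_i)=x_i(1+(A\tilde x)_i)$ are equivalent to the two companion matrix identities
\begin{equation*}
M^+(\tilde x)\,X=\tilde X\,M^-(x)^T, \qquad M^-(x)\,\tilde X=X\,M^+(\tilde x)^T.
\end{equation*}

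Next, implicit differentiation of $M^-(x)\tilde x=x$ gives the Jacobian $D\phi=M^-(x)^{-1}M^+(\tilde x)$ of the Kahan map, so the Kahan-Poisson condition $(D\phi)\,\Pi(x)\,(D\phi)^T=\Pi(\tilde x)$, with $\Pi(x):=XAX$, becomes
\begin{equation*}
M^+(\tilde x)\,XAX\,M^+(\tilde x)^T = M^-(x)\,\tilde X A\tilde X\,M^-(x)^T.
\end{equation*}
Applying the companion identities to both sides absorbs the outer factors of $X$ and $\tilde X$ and reduces the problem to the identity
\begin{equation*}
\tilde X\,M^-(x)^T A M^-(x)\,\tilde X = X\,M^+(\tilde x)^T A M^+(\tilde x)\,X.
\end{equation*}

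The final step is an entry-wise verification. Expanding $M^-(x)^T A M^-(x)=\lambda A\lambda-\lambda AXA+AXA\lambda-AXAXA$ with $\lambda:=\mathrm{diag}(\mathbf 1-Ax)$, and the analogous expression for $M^+(\tilde x)^T A M^+(\tilde x)$ in terms of $\mu:=\mathrm{diag}(\mathbf 1+A\tilde x)$, the outer contributions $\tilde X\lambda A\lambda\tilde X$ and $X\mu A\mu X$ match immediately, thanks to the Kahan diagonal relation $\tilde X\lambda=X\mu$. The remaining triple and quadruple sums $\sum_k a_{i,k}a_{k,j}(\,\cdot\,)$ and $\sum_{k,l}a_{i,k}a_{k,l}a_{l,j}(\,\cdot\,)$ must then be shown to cancel against each other; the combinatorial key is the defining feature $a_{i,k}^2=1-\delta_{i,k}$ of $\Ga n$, together with one further use of the Kahan equation to convert factors $\tilde x_i\lambda_i$ into $x_i\mu_i$.

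The main obstacle is this combinatorial verification. It is almost trivial for $n=2$: any product $M^T A M$ with $A$ a $2\times 2$ skew-symmetric matrix equals $(\det M)\cdot A$, so here $M^-(x)^TAM^-(x)=(\det M^-(x))A$ and $M^+(\tilde x)^TAM^+(\tilde x)=(\det M^+(\tilde x))A$, and the identity collapses to the scalar relation $\det M^-(x)\cdot\tilde x_1\tilde x_2=\det M^+(\tilde x)\cdot x_1x_2$, which follows directly from the explicit form of the Kahan map. For $n\geq 3$ the matrices $M^\pm(\cdot)^TAM^\pm(\cdot)$ are no longer proportional to $A$, so the identity must be checked entry by entry, with careful tracking of how the index sums split according to the relative position of $k$ (and $l$) with respect to $i$ and $j$; this combinatorial book-keeping, essentially dependent on the $\mathrm{sign}(j-i)$ form of the entries of $A$, is the technical heart of the proof.
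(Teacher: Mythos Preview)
Your matrix framework is sound: the companion identities $M^+(\tilde x)X=\tilde X M^-(x)^T$ and $M^-(x)\tilde X=XM^+(\tilde x)^T$, the Jacobian formula $D\phi=M^-(x)^{-1}M^+(\tilde x)$, and the reduction of the Poisson-map condition to
\[
\tilde X\,M^-(x)^T A\, M^-(x)\,\tilde X \;=\; X\,M^+(\tilde x)^T A\, M^+(\tilde x)\,X
\]
are all correct. However, you stop precisely where the real work begins. The entry-wise cancellation of the remaining triple and quadruple sums is asserted but not carried out, and your stated ``combinatorial key'' $a_{i,k}^2=1-\delta_{i,k}$ holds for \emph{every} tournament---including the $3$-cycle, which does \emph{not} have the Kahan--Poisson property---so by itself it cannot close the argument. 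The acyclic structure $a_{i,j}=\mathrm{sign}(j-i)$ must enter in an essential and as yet unspecified way, and until that bookkeeping is actually written out the proof is incomplete.

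The paper takes a much shorter and genuinely different route: pass to the partial-sum variables $u_i:=x_1+\cdots+x_i$. In these coordinates the Kahan map decouples,
\[
\tilde u_i \;=\; u_i\,\frac{1+H}{1-H+2u_i}\qquad(H=u_n),
\]
and the Poisson bracket simplifies to $\{u_i,u_j\}=u_i(u_j-u_i)$ for $i<j$. Checking $\{\tilde u_i,\tilde u_j\}=\tilde u_i(\tilde u_j-\tilde u_i)$ is then a direct computation involving only the three quantities $u_i,u_j,H$ and their pairwise brackets. The change of variables is exactly what exposes the special structure of $\Ga n$; in your approach that structure remains buried in the index combinatorics, which is why your ``technical heart'' is hard rather than immediate.
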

\begin{proof}
For the proof, we refer to \cite[Proposition 3.8]{KKQTV}. Yet, we point out the crucial fact that makes the
computation feasible. In terms of the variables $u_1,\dots,u_n$, and $\tilde u_1,\dots,\tilde u_n$, defined by
$u_i:=x_1+x_2+\cdots+x_i$, and $\tilde u_i:=\tx1+\tx2+\cdots+\tx i$, the Kahan map (with $\varepsilon=1$) takes the
simple separated form
\begin{equation}\label{eq:Kahan_u_i}
  \tilde u_i=u_i\frac{1+H}{1-H+2u_i}\;, \quad i=1,2, \dots , n \; ,
\end{equation}%
while the Poisson bracket takes the form
\begin{equation}\label{eq:pb_in_u}
\pb{u_i,u_j}=u_i(u_j-u_i)\;, \quad\hbox{ for } 1\leqslant i<j\leqslant n\;.
\end{equation}
It therefore suffices to verify that $\pb{\tilde u_i,\tilde u_j}=\tilde u_i(\tilde u_j-\tilde u_i)$, for
$1\leqslant i<j\leqslant n$, with $\tilde u_i$ given by~\eqref{eq:Kahan_u_i}, which is easily done by direct
computation, using~\eqref{eq:pb_in_u}.
\end{proof}
As we mentioned earlier, if we multiply the adjacency matrix $A$ of $\Gamma$ by any non-zero scalar, the property
for the corresponding graph of having the \KP\ is not affected. In particular, Proposition \ref{prp:LVn0_KP}
implies that $\gamma\Ga n$ has the \KP\ for any $n\in\bbN^*$ and for any $\gamma\in\bbF^*$.

\subsection{Cloning and decloning}\label{par:cloning}
The cloning of Lotka-Volterra systems, which was introduced in \cite{PPP_graphs}, is the inverse operation to
decloning, which we already recalled in the introduction. Let $\Gamma=(S,A)$ be a skew-symmetric graph with vertex
set $S=\set{1,2,\dots,n}$. Let $\wght$ be a \emph{weight vector} for $\Gamma$, i.e., $\wght$ is a function
$\wght:S\to\bbN^*$. The \emph{cloning} of $(\Gamma,\wght)$ is the skew-symmetric graph
$\Gamma^\wght=(S^\wght,A^\wght)$, constructed as follows: on the one hand, every vertex $i\in S$ gives rise
to~$\wght(i)$ vertices in $S^\wght$, which we denote by $(i,1),(i,2),\dots,(i,\wght(i))$. On the other hand, the
entries $a^\wght_{(i,k),(j,\ell)}$ of the (skew-symmetric) adjacency matrix $A^\wght$ of $\Gamma^\wght$ are defined
by $a^\wght_{(i,k),(j,\ell)}:=a_{i,j}$, for $i,j\in S$ and $1\leqs k\leqs\wght(i)$, $1\leqs \ell\leqs\wght(j)$. By
definition, the \emph{cloning} of $(\LV(\Gamma),\wght)$ is~$\LV(\Gamma^\wght)$. We will denote the linear
coordinates corresponding to the vertices $(i,k)$ by $y_{i,k}$ and write, as above, $\bbF(y)$ for the field of
rational functions $\bbF(y_{1,1},y_{1,2},\dots,y_{n,\wght(n)})$. The Poisson bracket on~$\bbF(y)$ associated with
$A^\wght$ is given by $\pb{y_{i,k},y_{j,\ell}}^\wght=a_{i,j}y_{i,k}y_{j,\ell}$, as follows from the definition
of~$\Gamma^\wght$. For $1\leqslant i\leqslant n$ and $1\leqslant k,m\leqslant \wght(i)$ the functions
$y_{i,m}/y_{i,k}$ are \emph{Casimir functions} of $\PB^\wght$: they belong to the center of the Poisson
bracket. As Hamiltonian, we again take the sum of all variables,
$H^\wght:=\sum_{i=1}^n\sum_{k=1}^{\wght(i)}y_{i,k}$. The Hamiltonian vector field $\X_{H^\wght}$ of
$\LV(\Gamma^\wght)$ is then given by the following differential equations:
\begin{equation*}
  \dot y_{i,k} = y_{i,k}\sum_{j=1}^n\sum_{\ell=1}^{\wght(j)}a_{i,j} y_{j,\ell}\;, \qquad \ i=1,\dots ,
  n,\ k=1,\dots,\wght(i)\; .
\end{equation*}
Its Kahan discretization is implicitly defined, as in \eqref{eq:LV_Kahan}, by the following equations:
\begin{equation}\label{eq:kahan_gen}
  \frac{\ty ik-y_{i,k}}\varepsilon = \ty ik\sum_{j=1}^n\sum_{\ell=1}^{\wght(j)}a_{i,j}
  y_{j,\ell}+y_{i,k}\sum_{j=1}^n\sum_{\ell=1}^{\wght(j)}a_{i,j}\ty j\ell\;,\qquad\ i=1,\dots,n,\ k=1,\dots,\wght(i)\;.
\end{equation}
The corresponding automorphism $K^\wght$ of $\bbF(y)$ is defined by
$K^\wght(y_{i,k}):=\ty ik$, for $i=1,\dots,n$ and $k=1,\dots,\wght(i)$. We view $\bbF(y)$ as a field extension of
$\bbF(x)$ using the \emph{decloning morphism} $\Sigma:\bbF(x)\to\bbF(y)$, defined by
\begin{equation}\label{equ:Sigma_def}
  \Sigma(x_i):=\sum_{k=1}^{\wght(i)}y_{i,k}\;.
\end{equation}
We show in the following proposition that the decloning morphism commutes with the Kahan morphism and that the KP
property is preserved under cloning and decloning:
\begin{prop}\label{prp:kahan_cloning}
  With the above definitions and notations, the following diagram of fields and field morphisms is commutative:
    \begin{equation}\label{dia:kahan_sigma_commute}
    \begin{tikzcd}[row sep=5ex, column sep=7ex]
      (\bbF(x),\PB)\arrow{r} {K}\arrow[swap]{d} {\Sigma}&(\bbF(x),\PB)\arrow{d}{\Sigma}\\
            {(\bbF(y),\PB^\wght)}\arrow[swap]{r}{K^\wght}&{(\bbF(y),\PB^\wght)}
    \end{tikzcd}
    \end{equation}
  The vertical arrow $\Sigma$ is a Poisson morphism, while $K$ is a Poisson morphism if and only if $K^\wght$ is a
  Poisson morphism.
\end{prop}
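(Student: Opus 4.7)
The proof naturally splits along the three assertions, which I would address in increasing order of difficulty. That $\Sigma$ is a Poisson morphism follows from a direct bilinear computation:
\[
\pb{\Sigma(x_i),\Sigma(x_j)}^\wght
 = \sum_{k,\ell}\pb{y_{i,k},y_{j,\ell}}^\wght
 = a_{i,j}\sum_{k,\ell}y_{i,k}y_{j,\ell}
 = a_{i,j}\Sigma(x_i)\Sigma(x_j)
 = \Sigma(\pb{x_i,x_j}).
\]
I would record for later use that $\Sigma$ is injective, since the elements $\Sigma(x_i)$ involve disjoint sets of variables and are therefore algebraically independent in $\bbF(y)$.

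For the commutativity of \eqref{dia:kahan_sigma_commute}, my plan is to sum the Kahan equations \eqref{eq:kahan_gen} over $k=1,\dots,\wght(i)$ with $i$ fixed. Setting $z_i := \Sigma(x_i) = \sum_k y_{i,k}$ and $\tilde z_i := \sum_k \ty i k$, the right-hand side collapses to $\tilde z_i\sum_j a_{i,j} z_j + z_i\sum_j a_{i,j}\tilde z_j$, so the system satisfied by the $\tilde z_i$'s is exactly the Kahan system \eqref{eq:LV_Kahan} of $\LV(\Gamma)$ in the variables $z_i,\tilde z_i$. Uniqueness of the solution of this generically invertible linear system then forces $\tilde z_i = \Sigma(\tx i)$, which is precisely the identity $K^\wght\circ\Sigma = \Sigma\circ K$ on the generators $x_i$.

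The equivalence ``$K$ Poisson $\Leftrightarrow$ $K^\wght$ Poisson'' is the main content. The direction $K^\wght \Rightarrow K$ is the easier one: applying $\Sigma$ to the sought identity $\pb{K(x_i),K(x_j)} = K\pb{x_i,x_j}$ and invoking in turn that $\Sigma$ is Poisson, the commutativity of the square, and that $K^\wght$ is Poisson, I expect both sides to reduce to $K^\wght\pb{\Sigma(x_i),\Sigma(x_j)}^\wght$, after which the injectivity of $\Sigma$ yields the claim in $\bbF(x)$. The main obstacle is the converse direction, where a Poisson property has to be extended from the subfield $\Sigma(\bbF(x))\subset\bbF(y)$ to the whole of $\bbF(y)$. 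The decisive observation is that solving \eqref{eq:kahan_gen} linearly for $\ty i k$ gives $\ty i k = y_{i,k}(1+\veps\beta_i)/(1-\veps\alpha_i)$, with $\alpha_i:=\sum_{j,\ell}a_{i,j}y_{j,\ell}$ and $\beta_i:=\sum_{j,\ell}a_{i,j}\ty j\ell$ independent of $k$; hence $K^\wght$ fixes the Casimir ratios $y_{i,k}/y_{i,m}$ of $\PB^\wght$. Since $\bbF(y)$ is generated as a field by $\Sigma(\bbF(x))$ together with these Casimirs, it is enough to check $K^\wght\pb{f,g}^\wght = \pb{K^\wght f,K^\wght g}^\wght$ on pairs of generators. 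When one of $f,g$ is such a Casimir, both sides vanish; and for $f=z_i$, $g=z_j$, using $\tilde z_i=\Sigma(K(x_i))$ together with the Poisson property of $\Sigma$ and the hypothesis that $K$ is Poisson, I would conclude via the chain
\[
\pb{\tilde z_i,\tilde z_j}^\wght = \Sigma\pb{K(x_i),K(x_j)} = \Sigma K\pb{x_i,x_j} = a_{i,j}\tilde z_i\tilde z_j = K^\wght\pb{z_i,z_j}^\wght.
\]
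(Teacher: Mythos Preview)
Your proposal is correct and follows essentially the same route as the paper: commutativity by summing \eqref{eq:kahan_gen} over $k$ and invoking uniqueness of the linear solution, $\Sigma$ Poisson by the same bilinear computation, and the equivalence resting on the observation that the ratios $y_{i,k}/y_{i,m}$ (equivalently $\Sigma(x_i)/y_{i,k}$) are simultaneously Casimirs of $\PB^\wght$ and invariants of $K^\wght$. The only cosmetic difference is in the packaging of the equivalence: the paper condenses both directions into the single formula $\pb{\ty ik,\ty j\ell}^\wght=\ty ik\,\ty j\ell\,\Sigma\bigl(\pb{\tx i,\tx j}/(\tx i\tx j)\bigr)$, whereas you treat $K^\wght\Rightarrow K$ via injectivity of $\Sigma$ and $K\Rightarrow K^\wght$ via the field-generation argument from $\Sigma(\bbF(x))$ and the Casimir ratios.
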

\begin{proof}
  We first show that the diagram is commutative. Let us set $\varepsilon=1$, as before. Using $\Sigma$,
  \eqref{eq:kahan_gen} can be rewritten as
\begin{equation}\label{eq:kahan_cloned}
  \ty ik-y_{i,k} = \ty ik\sum_{j=1}^na_{i,j}\Sigma(x_j)+y_{i,k}\sum_{j=1}^n a_{i,j}
  \overline{\Sigma(x_j)}\;.
\end{equation}
For fixed $i$, summing up \eqref{eq:kahan_cloned} for $k=1,\dots,\wght(i)$, we get
\begin{equation}\label{eq:kahan_cloned_summed}
  \overline{\Sigma(x_i)}-\Sigma(x_i) = \overline{\Sigma(x_i)}\sum_{j=1}^na_{i,j}\Sigma(x_j)+\Sigma(x_i)\sum_{j=1}^n
  a_{i,j} \overline{\Sigma(x_j)}\;.
\end{equation}
Recall that $\tx 1,\dots,\tx n$ is the unique solution to \eqref{eq:LV_Kahan}; if we write this solution
by making explicit its dependency on $x_1,\dots,x_n$ as $\tx i=R_i(x_1,\dots,x_n)$, then it follows from
comparing \eqref{eq:LV_Kahan} and~\eqref{eq:kahan_cloned_summed} that
$\overline{\Sigma(x_i)}=R_i(\Sigma(x_1),\dots,\Sigma(x_n))$. Since $\Sigma$ is an algebra homomorphism and $R_i$ is
a rational function of its arguments,
\begin{equation*}
  \overline{\Sigma(x_i)}=R_i(\Sigma(x_1),\dots,\Sigma(x_n))=\Sigma(R_i(x_1,\dots,x_n))=\Sigma(\tilde x_i)\;,
\end{equation*}%
showing the commutativity of the diagram.  For $1\leqslant i<j\leqslant n$ we have
$$
  \pb{\Sigma(x_i),\Sigma(x_j)}^\wght
  =\sum_{k=1}^{\wght(i)}\sum_{\ell=1}^{\wght(j)}\pb{y_{i,k},y_{j,\ell}}^\wght
  =a_{i,j}\sum_{k=1}^{\wght(i)}\sum_{\ell=1}^{\wght(j)}y_{i,k}y_{j,\ell}=a_{i,j}\Sigma(x_i)\Sigma(x_j)=\Sigma\pb{x_i,x_j}\;,
$$
as follows from the definitions of the Poisson brackets and of $\Sigma$. As a consequence, the vertical arrows in
the diagram \eqref{dia:kahan_sigma_commute} are morphisms of Poisson fields. In order to show that the two
horizontal arrows in that diagram are at the same time morphisms of Poisson fields, we first show that
\begin{equation}\label{eq:Casimirs}
  \Sigma(x_i)/y_{i,k} \hbox{ is a Casimir of }\PB^\wght \hbox{ and is an invariant of } K^\wght\;,
\end{equation}%
for $1\leqslant i\leqslant n$ and $1\leqslant k\leqslant \wght(i)$.  The first statement in \eqref{eq:Casimirs}
follows at once from the fact that $y_{i,m}/y_{i,k}$ is for any $1\leqslant k,m\leqslant \wght(i)$ a Casimir
function of~$\PB^\wght$, as we already recalled. The second statement means that
$\Sigma(x_i)/y_{i,k}=\overline{\Sigma(x_i)}/\bar y_{i,k}$. To prove the latter, divide \eqref{eq:kahan_cloned} by
$y_{i,k}$, to see that $\ty ik/y_{i,k}$ is independent of $k$; then $\ty ik/y_{i,k}=\ty im/y_{i,m}$, for $k$ and
$m$ as above, so that $y_{i,m}/y_{i,k}$ is an invariant, and hence also
$\Sigma(x_i)/y_{i,k}=\sum_{m=1}^{\wght(i)}y_{i,m}/y_{i,k}$. We can therefore write
\begin{equation}\label{eq:y_to_Y}
  \bar y_{i,k}=\frac{y_{i,k}}{\Sigma(x_i)}\overline{\Sigma(x_i)}\quad\hbox{ and }\quad
  \bar y_{j,\ell}=\frac{y_{j,\ell}}{\Sigma(x_j)}\overline{\Sigma(x_j)}\;.
\end{equation}%
In the second formula above, $1\leqslant j\leqslant n$ and $1\leqslant\ell\leqslant\wght(j)$. Then, 
\begin{align}\label{equ:Kahan_Poisson}
  \pb{\ty ik,\ty j\ell}^\wght&\stackrel{\eqref{eq:y_to_Y}}=
  \pb{\frac{y_{i,k}}{\Sigma(x_i)}\overline{\Sigma(x_i)},\frac{y_{j,\ell}}{\Sigma(x_j)}\overline{\Sigma(x_j)}}^\wght
  \stackrel{\eqref{eq:Casimirs}}=
  \frac{y_{i,k}y_{j,\ell}}{\Sigma(x_ix_j)}\pb{\overline{\Sigma(x_i)},\overline{\Sigma(x_j)}}^\wght\\
  &\stackrel{\eqref{eq:y_to_Y}}=\bar y_{i,k}\bar y_{j,\ell}
  \frac{\pb{\overline{\Sigma(x_i)},\overline{\Sigma(x_j)}}^\wght}{\overline{\Sigma(x_ix_j)}}
  \stackrel{\eqref{dia:kahan_sigma_commute}}= \bar y_{i,k}\bar
  y_{j,\ell}\frac{\pb{{\Sigma(\tx i)},{\Sigma(\tx j)}}^\wght}{\Sigma(\tx i\tx j)}
  =\bar y_{i,k}\bar y_{j,\ell}\Sigma\frac{\pb{\tx i,\tx j}}{\tx i\tx j}\;,\nonumber
\end{align}
where we have used in the last step that $\Sigma$ is a Poisson morphism. It follows that
\begin{align*}
  &\hbox{$K$ is a Poisson morphism}\\
  &\qquad\Longleftrightarrow\hbox{$\pb{\tx i,\tx j}=a_{i,j}\tx i\tx j$ for all $1\leqslant i,j\leqslant n$}\\
  &\qquad\stackrel{\eqref{equ:Kahan_Poisson}}\Longleftrightarrow
    \pb{\ty ik,\ty j\ell}^\wght=a_{i,j}\ty ik\ty j\ell\hbox { for all }
    1\leqslant i,j\leqslant n,\ 1\leqslant k\leqslant \wght(i),\ 1\leqslant \ell\leqslant \wght(j)\\
  &\qquad\Longleftrightarrow \hbox{$K^\wght$ is a Poisson morphism.}
\end{align*}
\end{proof}
Propositions \ref{prp:LVn0_KP} and \ref{prp:kahan_cloning} lead at once to the following corollary, which is the
inverse implication in Theorem \ref{thm:main}.

\begin{cor}\label{cor:inverse_implication}
  For any weight vector $\wght$ for $\Ga n$ and for any $\gamma\in\bbF^*$, the graph $\gamma\Ga n^\wght$ has the \KP.
  \qed
\end{cor}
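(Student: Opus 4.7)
The plan is to chain the two preceding results. Proposition \ref{prp:LVn0_KP} establishes that $\Ga n$ has the \KP, and the scaling remark following its proof (based on the fact that homotheties of a quadratic Poisson structure are Poisson maps) upgrades this to: $\gamma\Ga n$ has the \KP\ for every $\gamma\in\bbF^*$. Thus the starting point, $\gamma\Ga n$, already carries the property we want to propagate.

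Next, I would invoke Proposition \ref{prp:kahan_cloning} applied to $\Gamma := \gamma\Ga n$ with the chosen weight vector $\wght$, so that $\Gamma^\wght = \gamma\Ga n^\wght$. The proposition's final equivalence states that $K$ is a Poisson morphism if and only if $K^\wght$ is a Poisson morphism. Since the first condition holds by the previous paragraph, so does the second, which is exactly the assertion that $\gamma\Ga n^\wght$ has the \KP. There is nothing combinatorial to check separately, because cloning simply replicates the arc values $a_{i,j}=\gamma$ on the blown-up vertex set without altering the underlying scalar.

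The one subtlety worth flagging is that Proposition \ref{prp:kahan_cloning} is formulated for a fixed graph $\Gamma$ and an arbitrary weight vector $\wght$, so applying it with $\Gamma = \gamma\Ga n$ is immediate and requires no additional compatibility check. I do not anticipate any obstacle; the corollary is truly a combination of the two preceding results, and the proof amounts to writing a single line together with a pointer to the equivalence appearing at the end of the proof of Proposition \ref{prp:kahan_cloning}.
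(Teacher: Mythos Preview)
Your proposal is correct and matches the paper's approach exactly: the corollary is stated immediately after Propositions \ref{prp:LVn0_KP} and \ref{prp:kahan_cloning} with the remark that they ``lead at once'' to it, which is precisely the chaining you describe (including the scaling observation that extends $\Ga n$ to $\gamma\Ga n$).
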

\section{The 3-dimensional case}\label{sec:3dim}
In this section we prove the direct implication of Theorem \ref{thm:main} in case the skew-symmetric graph $\Gamma$
has $n=3$ vertices. We do not need to prove this in case $n=2$ because there is only one non-trivial skew-symmetric
graph with two vertices, which is the graph $\gamma\Ga 2$, with $\gamma\in\bbF^*$, for which we know from
Proposition \ref{prp:LVn0_KP} that it has the \KP.

\subsection{The known cases}\label{sec:known_dim3}
Let $\Gamma=(S,A)$ be any non-trivial skew-symmetric graph with three vertices, $S=\set{1,2,3}$. By assumption, it
has at least one arc, which we may suppose to be an arc between the vertices $1$ and~$2$, that is
$a_{1,2}\neq0$. Since, as we have seen, the \KP\ is preserved by a rescaling of $A$, we may suppose that
$a_{1,2}=1$; let us denote $\a:=a_{1,3}$ and~$\b:=a_{2,3}$. We list in Table \ref{tab:dim_3} the values of $\a$ and
$\b$ for which we know that the corresponding graph has the \KP, because it is isomorphic to $\Ga 3$ or to a
cloning of $\Ga2$ (Corollary \ref{cor:inverse_implication}). The last three graphs are isomorphic, but it will be
convenient to consider them all. For future use (Section~\ref{par:dim4}), notice also that these graphs $\Gamma$
are characterized amongst all non-trivial 3-vertex graphs as follows: $\Gamma$ has
\begin{enumerate}
  \item[$\bullet$] Either a single arc (in which case the graph is disconnected);
  \item[$\bullet$] Or two arcs, both starting from -- or ending in -- the same vertex;
  \item[$\bullet$] Or three arcs which do not form a circuit.
\end{enumerate}

\begin{table}[h]
  \def\arraystretch{3.5}
  \setlength\tabcolsep{0.4cm}
  \centering
\begin{tabular}{c|cccc}
   $(\a,\b)$&$\Gamma=\Delta^\wght$&$\Delta$&$\wght$&\\
  \hline 
  (0,0)
  &\raisebox{-3mm}{\tiny
    \begin{tikzpicture}[->,shorten >=1pt,auto,node distance=1cm,thick,
                                 main node/.style={circle,draw,font=\bfseries}]
      \node[main node] (1) {$1$};
      \node[main node] (3) [below right of=1] {$3$};
      \node[main node] (2) [above right of=3] {$2$};
      \path
      (1) edge (2) ;
    \end{tikzpicture}}
  &\raisebox{-3mm}{\tiny
    \begin{tikzpicture}[->,shorten >=1pt,auto,node distance=1cm,thick,
                                 main node/.style={circle,draw,font=\bfseries}]
      \node[main node] (1) {$1$};
      \node[main node] (3) [below right of=1] {$3$};
      \node[main node] (2) [above right of=3] {$2$};
      \path
      (1) edge (2) ;
    \end{tikzpicture}}
  &(1,1,1)&\\ 
  (0,-1)
  &\raisebox{-3mm}{\tiny
    \begin{tikzpicture}[->,shorten >=1pt,auto,node distance=1cm,thick,
                                 main node/.style={circle,draw,font=\bfseries}]
      \node[main node] (1) {$1$};
      \node[main node] (3) [below right of=1] {$3$};
      \node[main node] (2) [above right of=3] {$2$};
      \path
      (1) edge (2)
      (3) edge (2);
    \end{tikzpicture}}
  &\raisebox{-1mm}{\tiny
    \begin{tikzpicture}[->,shorten >=1pt,auto,node distance=1cm,thick,
                                 main node/.style={circle,draw,font=\bfseries}]
      \node[main node] (1) {$1$};
      \node[main node] (2) [right of=1] {$2$};
      \path
      (1) edge (2);
    \end{tikzpicture}}
  &(2,1)&\\
  (1,0)
  &\raisebox{-3mm}{\tiny
    \begin{tikzpicture}[->,shorten >=1pt,auto,node distance=1cm,thick,
                                 main node/.style={circle,draw,font=\bfseries}]
      \node[main node] (1) {$1$};
      \node[main node] (3) [below right of=1] {$3$};
      \node[main node] (2) [above right of=3] {$2$};
      \path
      (1) edge (2) edge (3);
    \end{tikzpicture}}
  &\raisebox{-1mm}{\tiny
    \begin{tikzpicture}[->,shorten >=1pt,auto,node distance=1cm,thick,
                                 main node/.style={circle,draw,font=\bfseries}]
      \node[main node] (1) {$1$};
      \node[main node] (2) [right of=1] {$2$};
      \path
      (1) edge (2);
    \end{tikzpicture}}
  &(1,2)&\\
  (1,-1)
  &\raisebox{-3mm}{\tiny
    \begin{tikzpicture}[->,shorten >=1pt,auto,node distance=1cm,thick,
                                 main node/.style={circle,draw,font=\bfseries}]
      \node[main node] (1) {$1$};
      \node[main node] (3) [below right of=1] {$3$};
      \node[main node] (2) [above right of=3] {$2$};
      \path
      (1) edge (2) edge (3)
      (3) edge (2);
    \end{tikzpicture}}
  &\raisebox{-3mm}{\tiny
    \begin{tikzpicture}[->,shorten >=1pt,auto,node distance=1cm,thick,
                                 main node/.style={circle,draw,font=\bfseries}]
      \node[main node] (1) {$1$};
      \node[main node] (3) [below right of=1] {$3$};
      \node[main node] (2) [above right of=3] {$2$};
      \path
      (1) edge (2) edge (3)
      (3) edge (2);
    \end{tikzpicture}}
  &(1,1,1)&\\
  (1,1)
  &\raisebox{-3mm}{\tiny
    \begin{tikzpicture}[->,shorten >=1pt,auto,node distance=1cm,thick,
                                 main node/.style={circle,draw,font=\bfseries}]
      \node[main node] (1) {$1$};
      \node[main node] (3) [below right of=1] {$3$};
      \node[main node] (2) [above right of=3] {$2$};
      \path
      (1) edge (2) edge (3)
      (2) edge (3);
    \end{tikzpicture}}
  &\raisebox{-3mm}{\tiny
    \begin{tikzpicture}[->,shorten >=1pt,auto,node distance=1cm,thick,
                                 main node/.style={circle,draw,font=\bfseries}]
      \node[main node] (1) {$1$};
      \node[main node] (3) [below right of=1] {$3$};
      \node[main node] (2) [above right of=3] {$2$};
      \path
      (1) edge (2) edge (3)
      (2) edge (3);
    \end{tikzpicture}}
  &(1,1,1)&\\
 (-1,-1)
  &\raisebox{-3mm}{\tiny
    \begin{tikzpicture}[->,shorten >=1pt,auto,node distance=1cm,thick,
                                 main node/.style={circle,draw,font=\bfseries}]
      \node[main node] (1) {$1$};
      \node[main node] (3) [below right of=1] {$3$};
      \node[main node] (2) [above right of=3] {$2$};
      \path
      (1) edge (2) 
      (3) edge (2) edge (1);
    \end{tikzpicture}}
  &\raisebox{-3mm}{\tiny
    \begin{tikzpicture}[->,shorten >=1pt,auto,node distance=1cm,thick,
                                 main node/.style={circle,draw,font=\bfseries}]
      \node[main node] (1) {$1$};
      \node[main node] (3) [below right of=1] {$3$};
      \node[main node] (2) [above right of=3] {$2$};
      \path
      (1) edge (2)
      (3) edge (2) edge (1);      
    \end{tikzpicture}}
  &(1,1,1)&\\
\end{tabular}\quad
\medskip
\caption{The six cases of non-trivial three-vertex graphs~$\Gamma$, with $a_{1,2}=1$, which we already know to have
  the Kahan-Poisson property. Every arc has value $1$.}\label{tab:dim_3}
\end{table}

We will prove that there are no other three-vertex graphs $\Gamma=(S,A)$, with $a_{1,2}=1$, which have the \KP.

\subsection{Computing efficiently the Poisson brackets}\label{par:compute_poisson}
Since the Kahan map of a Lotka-Volterra system is given by rational functions, which are already quite complicated
in dimension 3, we explain here how the condition that the Kahan morphism is a Poisson morphism leads to necessary
conditions that are computable by hand, and which will actually be sufficient for our purposes. Since we will use
our method also in dimension 4, we will explain it for any skew-symmetric graph $\Gamma$ with $n>2$ vertices. As
before, the adjacency matrix of $\Gamma$ is denoted by $A$. We first write down the basic equations and introduce
some notation. We write the equations~\eqref{eq:LV_Kahan} in two different ways (recall that we have set
$\varepsilon=1$). On the one hand, we write them as a linear system in $\tx 1,\dots,\tx n$,
\begin{equation}\label{eq:Kahan_as_linear}
  M\tilde x^T=x^T\;.
\end{equation}%
Notice that every entry of $M$ is an affine function of $x_1,\dots,x_n$. For a point $P\in\bbF^n$ we denote by
$M(P)$ the evaluation of $M$ at $P$ and by $\tP$ the image of~$P$ under the Kahan map, i.e.,
$\tP=(\tx1(P),\tx2(P),\tx3(P))$. On the other hand, it is easy to see that \eqref{eq:LV_Kahan} can also be written
as $\cL_k=0$, for $k=1,2,\dots,n$, where $\cL_k$ is defined by
\begin{equation}\label{equ:L_i}
  \cL_k:=\(\sum_{j=1}^na_{k,j}x_j-1\)\tx k+x_k\sum_{j=1}^na_{k,j}\tx j+x_k\;.
\end{equation}%
Indeed, \eqref{equ:L_i} is the $k$-th entry of $x^T-M\tilde x^T$. We denote the column vector whose $k$-th entry is
$\cL_k$ by $\cL$ and write $\cL(\tP)$ for $\cL$ with the functions $\tx 1,\dots,\tx n$ evaluated at $P$. Each
entry of $\cL(\tP)$ is also an affine function of $x_1,\dots,x_n$.

We can now explain how to explicitly compute, for given $i,j$ and $P$, the condition $\pb{\tx i,\tx
  j}(P)=a_{i,j}\tx i(P)\tx j(P)$. We do this in four different steps.

\underline{Step 1:} The image point $\tP$.

If $M(P)$ is non-singular, i.e., $\det M(P)\neq0$, the Kahan map is defined at~$P$ and its image $\tP$ is computed
from
\begin{equation}\label{equ:Pt}
  M(P)\tP^T=P^T\;.
\end{equation}%
Notice that $\det M(P)$ depends (linearly) on the entries $a_{k,\ell}$ of $A$, so for a given $P$ it may be zero
for some values of these entries $a_{k,\ell}$. The computations that follow are then not valid for these values; as
we will see, it is important to keep track of these values. In the steps which follow, we assume that $\det
M(P)\neq0$.

\underline{Step 2:} Computation of the Poisson brackets $\pb{\tx j,x_\ell}(P)$, $\ell=1,\dots,n$.

We compute the Poisson brackets $\pb{\tx k,x_\ell}(P)$ for $k,\ell=1,\dots,n$. This can easily be done directly
from \eqref{eq:Kahan_as_linear} without solving the latter for $\tilde x$, as follows. Let us denote by
$\pb{M,x_\ell}$ the matrix obtained by taking the Poisson bracket of every entry of $M$ with $x_\ell$, and
similarly for the column vector $\pb{\tilde x^T,x_\ell}$. Then it follows from \eqref{eq:Kahan_as_linear}, using
the Leibniz rule, that $ \pb{M,x_\ell}\tilde x^T+M\pb{\tilde x^T,x_\ell}=\pb{x^T,x_\ell}, $
so that, at $P$,
\begin{equation}\label{equ:first_pb}
  M(P)\pb{\tilde x^T,x_\ell}(P)=\pb{x^T,x_\ell}(P)-\pb{M,x_\ell}(P)\tP^T\;.
\end{equation}%
This gives a linear system for the brackets $\pb{\tx k,x_\ell}(P)$, $k=1,\dots,n$. Notice that the matrix governing
the linear system is again $M(P)$, so that these brackets are uniquely determined from \eqref{equ:first_pb} (recall
that we have assumed that $\det M(P)\neq0$). Also, the right hand side of \eqref{equ:first_pb} is equal to
$\pb{\cL(\tP),x_\ell}(P)$, since $\cL_k$ is the $k$-th entry of $x^T-M\tilde x^T$ (see
\eqref{eq:Kahan_as_linear}). It means that, in order the compute the right hand side of \eqref{equ:first_pb}, we
can start from the equations \eqref{eq:LV_Kahan} defining the Kahan map, evaluate the functions $\tx 1,\dots,\tx n$
at~$P$, and then take the Poisson bracket at $P$ of the remaining affine functions in $x_1,\dots,x_n$ with
$x_\ell$. Doing this for $\ell=1,\dots,n$ and solving the resulting linear system, we find the brackets $\pb{\tx
k,x_\ell}(P)$ for $k,\ell=1,\dots,n$. They are rational functions of the entries of the adjacency matrix $A$ of
$\Gamma$.

\underline{Step 3:} Computation of the Poisson brackets $\pb{\tx k,\tx j}(P)$.

The Poisson brackets $\pb{\tx k,\tx j}(P)$ for $k=1,\dots,n$ are computed in a quite similar way, using the $n$
Poisson brackets which were computed in Step 2 (recall that $j$ is fixed). In this step we take the Poisson bracket
of \eqref{eq:Kahan_as_linear} with $\tx j$ at $P$ to obtain, as in Step 2,
\begin{equation}\label{equ:second_pb}
  M(P)\pb{\tilde x^T,\tx j}(P)=\pb{x^T,\tx j}(P)-\pb{M,\tx j}(P)\tP^T=\pb{\cL(\tP),\tx j}(P)\;.
\end{equation}%
Notice that, again, the defining matrix of the linear system is $M(P)$ and that, again, the right hand side can
easily be computed from the equations defining the Kahan map, where the functions $\tx 1,\dots,\tx n$ are evaluated
at~$P$; it is here that one needs the Poisson brackets $\pb{\tx j,x_\ell}$, for $\ell=1,2,\dots,n$, which were
computed in Step~2.  Solving the resulting linear system, we find the brackets $\pb{\tx k,\tx j}(P)$ for
$k=1,\dots,n$. Again, they are rational functions of the entries of the adjacency matrix $A$ of $\Gamma$.

\underline{Step 4:} The Poisson morphism condition(s).

From the previous step we know $\pb{\tx i,\tx j}(P)$ and we can now write down explicitly the condition that
\begin{equation*}
  \pb{\tx i,\tx j}(P)=a_{i,j}\tx i(P)\tx j(P)\;,
\end{equation*}%
which is a sufficient condition for the Kahan morphism to be a Poisson morphism. Since the left hand side is a
rational function of the entries of the adjacency matrix $A$, we get a rational condition on these entries. If the
condition is not satisfied, we can conclude that the graph $\Gamma$ does not have the \KP. This is how we will use
this condition in what follows.

\subsection{The $3$-dimensional case}
We now prove the direct implication in Theorem \ref{thm:main} in the three-dimensional case. We assume that
$\Gamma=(S,A)$ is a three-vertex graph, with $a_{1,2}=1$. We need to show that when $\Gamma$ has the \KP,
which is equivalent to saying that
\begin{equation}\label{equ:Poisson_condition}
  \pb{\tx i,\tx j}=a_{i,j}\tx i\tx j\;,\quad \hbox{ for } 1\leqslant i<j\leqslant 3\;,
\end{equation}%
then $\Gamma$ is one of the graphs $\Gamma^\wght$ in Table \ref{tab:dim_3}. We will do this by
computing~\eqref{equ:Poisson_condition} at some well-chosen points (and for some particular values of $i,j$), using
the method of the previous subsection.

To do this, it is helpful to represent the six values from the first column in Table \ref{tab:dim_3} as points in
the plane, as indicated in Figure \ref{fig:dim_3}. One sees from the figure that these six points lie on (one or
two of) the lines $\a=1,\ \b=-1, \ \a=\b$. This will guide us in the proof as follows: we will first show that for
points $(\a,\b)$ on one of these three lines, the corresponding graph can only have the \KP\ if it is one of the
three special points on that line. Secondly, we will show that for points $(\a,\b)$ not lying on any of these
lines, the corresponding graph cannot have the \KP.

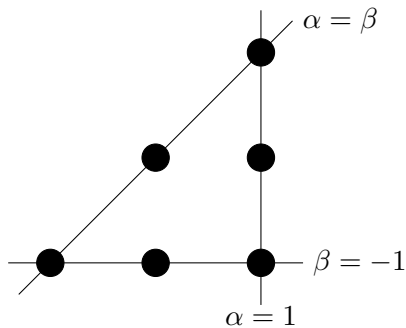
\begin{figure}[h]
\begin{tikzpicture}[scale=1.4,main node/.style={circle,draw,fill, font=\bfseries}]
  \draw (-1.4,-1) -- (1.4,-1) node (x)[right]{$\beta=-1$};
  \draw (1,-1.4) -- (1,1.4) ;
  \draw (-1.3,-1.3) -- (1.3,1.3) node (z)[right]{$\alpha=\beta$};
  \node (y) at (1,-1.5) {$\alpha=1$}; 
  \node[main node] (1) at (1,1){};
  \node[main node] (2) at (0,0){}; 
  \node[main node] (3) at (-1,-1){};
  \node[main node] (4) at (0,-1){};
  \node[main node] (5) at (1,0){};
  \node[main node] (6) at (1,-1){};
\end{tikzpicture}
\caption{Viewed as points in the plane, the six values $(\a,\b)$ for which we know that the corresponding 3-vertex
  graph has the KP property lie on three lines, each of which contains three of the points.\label{fig:dim_3}}
\end{figure}

We start with the line $\a=1$, so we suppose that $(\a,\b)=(1,\b)$, and we take $P:=(1,-1,-1)$. The condition which
we will compute is $\pb{\tx1,\tx3}(P)=\tx1(P)\tx3(P)$, so that $i=1$ and $j=3$. Since~$a_{1,2}=a_{1,3}=1$ and
$a_{2,3}=\b$, the Kahan map is in this case defined by the following linear system:
\begin{align}\label{equ:system_a=1}
  \tx 1 -x_1 &= \tx1(x_2+x_3)+x_1(\tx 2+\tx 3)\;,\nonumber\\
  \tx 2 -x_2 &= \tx2(\b x_3-x_1)+x_2(\b\tx 3-\tx 1)\;,\\
  \tx 3 -x_3 &= \tx3(-x_1-\b x_2)-x_3(\tx 1+\b\tx 2)\;.\nonumber
\end{align}
Following Step 1 of the method, we write these equations as $M\tilde x^T=x^T$, where
\begin{equation*}
  M=  \begin{pmatrix}
    1-x_2-x_3&-x_1&-x_1\\
    x_2&1+x_1-\b x_3&-\b x_2\\
    x_3&\b x_3&1+x_1+\b x_2
  \end{pmatrix}\;.
\end{equation*}
Then
\begin{equation*}
  M(P)=  \begin{pmatrix}
    3&-1&-1\\
    -1&2+\b&\b\\
    -1&-\b &2-\b
  \end{pmatrix},\;
  M(P)^{-1}=\frac18
  \begin{pmatrix}
    4&2&2\\
    2(1-\b)&5-3\b&1-3\b\\
    2(\b+1)&1+3\b&5+3\b
  \end{pmatrix}\;,
\end{equation*}
so that $\tP=-\frac12(0,1-\b,1+\b)$. Note that $\det M(P)=8$, so that $\tP$ is defined for any value of~$\b$.
Evaluating $\tx 1,\tx 2,\tx 3$ in \eqref{equ:system_a=1} at $P$, leads to
\begin{equation*}
  \cL(\tP)=\frac12\begin{pmatrix}
    0\\
    (1-\b)(x_1+(2+\b)x_2-\b x_3+1)\\
    (1+\b)(x_1+\b x_2+(2-\b)x_3+1)\\
  \end{pmatrix}\;.
\end{equation*}%
It follows that $\pb{\cL(\tP),x_\ell}(P)$ is, for $\ell=1,2,3$, respectively given by
\begin{equation*}
  \begin{pmatrix}
  0\\1-\b\\1+\b
  \end{pmatrix}\;,\quad
  \frac{1+\b}2\begin{pmatrix}
  0\\-(1-\b)^2\\(1-\b)^2-2
  \end{pmatrix}\quad\hbox{and}\quad
  \frac{\b-1}2\begin{pmatrix}
  0\\2-(1+\b)^2\\(1+\b)^2
  \end{pmatrix}\;.
\end{equation*}%
According to Step 2, we get the Poisson brackets $\pb{\tx k,x_\ell}(P)$ by multiplying these three vectors with the
inverse of $M(P)$. We display only the brackets $\pb{x_\ell,\tx 1}(P)$ since they are the only ones needed to
compute $\pb{\cL(\tP),\tx1}(P)$ and to finish the computation:
\begin{equation*}
  \pb{x_1,\tx1}(P)=-\frac12\;,\quad \pb{x_2,\tx1}(P)=\frac{1+\b}4\;,\quad  \pb{x_3,\tx1}(P)=\frac{1-\b}4\;.
\end{equation*}%
Using these values we get
\begin{equation*}
  \pb{\cL(\tP),\tx1}(P)=\frac{\b(\b^2-1)}4
  \begin{pmatrix}
    0\\-1\\1
  \end{pmatrix}\;.
\end{equation*}%
According to Step 3, we get the Poisson brackets $\pb{\tx k,\tx 1}(P)$ by multiplying this vector with the inverse
of $M(P)$. The only bracket we need is $\pb{\tx1,\tx3}(P)$, which is found to be equal to
$\frac18\b(1-\b^2)$. Since $\tx1(P)=0$ the condition that $\pb{\tx1,\tx3}(P)=\a\tx1(P)\tx3(P)$ reduces to the
condition on~$\b$ that $\b(1-\b)(1+\b)=0$. We have therefore shown that if $(\a,\b)$ is on the line $\a=1$, the
corresponding graph can only have the \KP\ if $\b\in\set{-1,0,1}$, which corresponds precisely to the three values
for which we know that the corresponding graph has the \KP.

We next consider the line $\b=-1$ and we take $P:=(1,-1,1)$. We have that
\begin{equation*}
  M(P)=  \begin{pmatrix}
    2-\a&-1&-\a\\
    -1&3&-1\\
    \a&-1 &\a+2
  \end{pmatrix}\;,
  \quad\hbox{ so that }
  \quad
  \tP=\frac12\begin{pmatrix}
  1+\a\\0\\1-\a
  \end{pmatrix}\;.
\end{equation*}
Again, $\det M(P)=8$ and $\tP$ is defined for any value of $\a$. The computations of the brackets are very similar
to the previous case and leads to a similar result: since $\pb{\tx1,\tx2}(P)=\frac a8(1-a^2)$ and $\tx2(P)=0$, we
may conclude as in the first case from the equality $\pb{\tx1,\tx2}(P)=\tx1(P)\tx2(P)$ that if $(\a,\b)$ is on the
line $\b=-1$, the corresponding graph can only have the \KP\ if $\a\in\set{-1,0,1}$, which corresponds again
precisely to the three values for which we know that the corresponding graph has the \KP.

We move to the case that $(\a,\b)$ is on the line $\a=\b$. We choose $P=(-\frac12,1,\frac12)$. A new phenomenon is
now that $\det M(P)$ depends on $\a$, since
\begin{equation*}
  M(P)= \frac12\begin{pmatrix}
    -\a&1&\a\\
    2&1-\a&-2\a\\
    \a&\a &\a+2
  \end{pmatrix}\;,
  \quad\det M(P)=-\frac12(\a+1)\;.
\end{equation*}
It means that what follows is not valid for $\a=-1$, but that is not a problem since $(-1,-1)$ is precisely one of
the cases for which we know that the corresponding graph has the \KP. Thus, under this assumption, we may continue
as before and compute the Poisson brackets to find that $\pb{\tx 1,\tx 2}(P) =\frac{\a-1}4\((\a-1)^2+5\)$. The
equality $\pb{\tx1,\tx2}(P)=\tx1(P)\tx2(P)$ then evaluates to $\a^2(\a-1)=0$, showing that if $\a=\b$, with
$\a\neq-1$, then the graph corresponding to $(\a,\b)$ can only have the \KP\ if $\a=0$ or $\a=1$, as was to be
shown.

We finally consider the case where $(\a,\b)$ does not belong to any of the three red lines of
Figure~\ref{fig:dim_3}, i.e., we suppose that $\a\neq\b$, that $\a\neq1$ and that $\b\neq-1$. We show that the
corresponding graph cannot have the \KP. As in the previous cases we use the method of Section
\ref{par:compute_poisson}. The formulas are slightly more complicated, so we present the computation in some more
detail.

We choose the point $P:=(-1,1,1)$. Then
\begin{equation*}
  \pb{x_1,x_2}(P)=-1\;,\quad  \pb{x_1,x_3}(P)=-\a\;,\quad  \pb{x_2,x_3}(P)=\b\;.
\end{equation*}%
Also,
\begin{equation*}
  M(P)=
  \begin{pmatrix}
    -\a&1&\a\\
    1&-\b&-\b\\
    \a&\b&1-\a+\b
  \end{pmatrix}\;,
  \hbox{ with }
  \delta:=\det M(P)=(\a-1)(\b+1)\;,
\end{equation*}%
and
\begin{equation*}
  M(P)^{-1}=\frac1\delta
  \begin{pmatrix}
    (\a-1)\b&(\a-1)(\b+1)&(\a-1)\b\\
    \a-\a\b-\b-1&-\a(\b+1)&\a(1-\b)\\
    (\a+1)\b&\a(\b+1)&\a\b-1
  \end{pmatrix}\;.
\end{equation*}%
It follows that $\tP^T=M(P)^{-1}(-1,1,1)^T=(1,-1,1)^T,$ so that
\begin{equation*}
  \cL(\tP)=
  \begin{pmatrix}
    -1+\a x_1+x_2+\a x_3\\
    1+ x_1+\b x_2-\b x_3\\
    -1-\a x_1-\b x_2+(1-\a+\b) x_3
  \end{pmatrix}\;,
\end{equation*}
and
$$
  \pb{\cL(\tP),x_j}(P)=
  \begin{pmatrix}
    1+\a^2&-\a(1+\b)&\b-\a^2\\
    (1-\a)\b&\b^2-1&\b^2-\a\\
    (1-\a)(\a-\b)&(1+\b)(\a-\b)&\a^2-\b^2
  \end{pmatrix}\;.
$$
Multiplying the latter matrix with $M(P)^{-1}$ we obtain the matrix of Poisson brackets $\pb{\tx i,x_j}(P)$ of
which we only need the first line, which yields the following Poisson brackets:
$$
  \pb{\tx 1,x_1}(P)=\frac{2\b}{\b+1}\;,\
  \pb{\tx 1,x_2}(P)=-1\;,\
  \pb{\tx 1,x_3}(P)=\frac{2\b^2-\a\b-\a}{\b+1}\;.
$$
Indeed, they are sufficient to compute the Poisson bracket $\pb{\cL(\tP),\tx 1}(P)$, which is given by
%
\begin{equation}\label{equ:Ltx}
  \pb{\cL(\tP),\tx 1}(P)=\frac1{\b+1}
  \begin{pmatrix}
    \a&1&\a\\
    1&\b&-\b\\
    -\a&-\b&1-\a+\b
  \end{pmatrix}
  \begin{pmatrix}
    -2\b\\ \b+1\\ \a+\a\b-2\b^2
  \end{pmatrix}
  =\begin{pmatrix}
  1-2\a\b+\a^2\\
  \b(2\b-1-\a)\\
  (\a-\b)(2\b-\a+1)
  \end{pmatrix}\;.
\end{equation}
We can now compute $\pb{\tx 3,\tx 1}(P)$ as the third entry of
$\pb{\tilde{x}^T,\tx1}(P)=M(P)^{-1}\pb{\cL(\tP),\tx1}(P)$, i.e., as the product of the third line of $M(P)^{-1}$
and \eqref{equ:Ltx}. After some simplifications, we get
\begin{equation*}
  \pb{\tx3,\tx1}(P)=\frac1\delta(\a-1)(\b+1)(\a-2\b)=\a-2\b\;,
\end{equation*}%
so that
\begin{equation*}
  \pb{\tx3,\tx1}(P)+\a\tx1(P)\tx3(P)=2(\a-\b)\neq0\;,
\end{equation*}%
since it was supposed that $\a\neq\b$. This shows our claim.

\section{The higher-dimensional case}\label{sec:ndim}
In this section we prove Theorem \ref{thm:main} when $\Gamma$ has at least four vertices. To do this, we will use
reduction, which amounts to removing vertices from the graph. In Section~\ref{par:reduction} we recall reduction
and show that the \KP\ is preserved by reduction. This is used in Section~\ref{par:dim4} to prove Theorem
\ref{thm:main} when $\Gamma$ has four vertices, and in Section~\ref{par:dimn} to prove Theorem~\ref{thm:main}
when~$\Gamma$ has more than four vertices.

\subsection{Reduction}\label{par:reduction}
Lotka-Volterra systems admit a class of natural reductions, which we first recall. Let $\Gamma=(S,A)$ be a graph
with vertex set $S=\set{1,2,\dots,n}$, where $n>1$. Let $S'\subset S$ be a proper, non-empty subset and denote
$m:=\#S'$. We denote by $\Gamma'=(S',A')$ the induced subgraph of~$\Gamma$. The natural inclusion map
$\Gamma'\to\Gamma$ is a graph morphism, hence induces a morphism of Lotka-Volterra systems
$\LV(\Gamma')\to\LV(\Gamma)$ (see \cite[Prop.~3.2]{PPP_graphs}). If we denote the natural coordinates on these
Lotka-Volterra systems respectively by $y=(y_1,\dots,y_m)$ and $x=(x_1,\dots,x_n)$, then the induced morphism
$\imath:\bbF(y)\to\bbF(x)$ is an injective Poisson morphism; if we denote by~$\tau$ the unique strictly increasing
function $\tau:\set{1,\dots,m}\to\set{1,\dots,n}$ which takes values in $S'$, then $\imath(y_i)=x_{\tau(i)}$, for
$i=1,\dots,m$. Let us denote by $K:\bbF(x)\to\bbF(x)$ and by $K':\bbF(y)\to\bbF(y)$ the Kahan morphisms of
$\LV(\Gamma)$, respectively of $\LV(\Gamma')$. Then $K'$ is the restriction of $K$ to $\bbF(y)$, i.e., the
following diagram is commutative:
\begin{equation}\label{dia:kahan_reduction_commute}
  \begin{tikzcd}[row sep=5ex, column sep=7ex]
    \bbF(y)\arrow{r}{K'}\arrow[swap]{d} {\imath}&\bbF(y)\arrow{d}{\imath}\\
          {\bbF(x)}\arrow[swap]{r}{K}&{\bbF(x)}
  \end{tikzcd}
\end{equation}
To see this, it suffices to consider the equations \eqref{eq:LV_Kahan} which define $K$ and notice that upon
setting $x_j=\tx j=0$ for all $j\in S\setminus S'$, only the equations corresponding to $i\in S'$ remain and are
the equations which define the Kahan morphism $K'$.

Suppose now that $\Gamma$ has the \KP, so that $K$ is a Poisson morphism. Since $\imath$ is an injective Poisson
morphism, we may conclude as in the first part of the proof of Proposition \ref{prp:kahan_cloning} that $K'$ is
also a Poisson morphism. We state this as the following result.
\begin{prop}\label{prp:reduction}
  Suppose that $\Gamma$ is a skew-symmetric graph which has the \KP. Then any induced subgraph $\Gamma'$ of
  $\Gamma$ has the \KP.\qed
\end{prop}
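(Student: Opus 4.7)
The plan is to exhibit the natural injective Poisson morphism $\imath:\bbF(y)\to\bbF(x)$ coming from the inclusion $S'\hookrightarrow S$, verify that it intertwines the two Kahan morphisms, and then use injectivity together with the hypothesis that $K$ is Poisson to transfer the property to $K'$. No step looks genuinely hard; the only delicate point is the commutativity of the Kahan diagram \eqref{dia:kahan_reduction_commute}, which must be argued directly from the implicit defining equations \eqref{eq:LV_Kahan}.

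First I would define $\imath(y_i):=x_{\tau(i)}$ for $i=1,\dots,m$ and check that it is a morphism of Poisson fields. This is immediate: by construction of the induced subgraph one has $a'_{i,j}=a_{\tau(i),\tau(j)}$, so
\[
\pb{\imath(y_i),\imath(y_j)}=a_{\tau(i),\tau(j)}x_{\tau(i)}x_{\tau(j)}=\imath(a'_{i,j}y_iy_j)=\imath\pb{y_i,y_j}'.
\]
Injectivity of $\imath$ is clear since it sends the algebraically independent generators $y_1,\dots,y_m$ to algebraically independent elements of $\bbF(x)$.

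Second, I would establish the commutativity of \eqref{dia:kahan_reduction_commute}, i.e., $K\circ\imath=\imath\circ K'$. For this one writes down the defining Kahan system \eqref{eq:LV_Kahan} for $\LV(\Gamma)$ and specializes $x_j=0$ and $\tx j=0$ for all $j\in S\setminus S'$. Since every term on either side of the $i$-th equation contains at least one factor among $x_j,\tx j$, the equations indexed by $i\in S\setminus S'$ become trivial, while the equations indexed by $i\in S'$ collapse exactly to the Kahan system for $\LV(\Gamma')$ in the variables $y_k=x_{\tau(k)}$ and $\bar y_k=\tx{\tau(k)}$. By uniqueness of the solution, $\tx{\tau(i)}$ evaluated on the image of $\imath$ coincides with $\imath(\bar y_i)$, which is precisely the commutativity of the diagram.

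Finally, assuming $\Gamma$ has the \KP, I would compute, for $1\leqs i,j\leqs m$,
\[
\imath\pb{K'(y_i),K'(y_j)}'=\pb{\imath K'(y_i),\imath K'(y_j)}=\pb{K\imath(y_i),K\imath(y_j)}=K\pb{\imath(y_i),\imath(y_j)}=K\imath\pb{y_i,y_j}'=\imath K'\pb{y_i,y_j}',
\]
using successively that $\imath$ is Poisson, the commutativity of \eqref{dia:kahan_reduction_commute}, the hypothesis that $K$ is Poisson, again that $\imath$ is Poisson, and once more commutativity of the diagram. Injectivity of $\imath$ then yields $\pb{K'(y_i),K'(y_j)}'=K'\pb{y_i,y_j}'$, so $K'$ is a Poisson morphism and $\Gamma'$ has the \KP. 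The main (and only) subtlety, as indicated above, lies in the commutativity step; everything else is formal manipulation once $\imath$ is in place.
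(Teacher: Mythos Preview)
Your argument follows the paper's essentially line for line: introduce the injective Poisson morphism $\imath$, argue that diagram~\eqref{dia:kahan_reduction_commute} commutes by setting $x_j=\tx j=0$ for $j\notin S'$, and then transfer the Poisson property along $\imath$. You have even written out the final transfer step explicitly, whereas the paper simply refers back to the argument of Proposition~\ref{prp:kahan_cloning}.

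That said, there is a genuine imprecision---shared with the paper's own formulation---at exactly the spot you flag as ``the only delicate point'': the diagram~\eqref{dia:kahan_reduction_commute} does \emph{not} commute as a diagram of field morphisms. Indeed $K(\imath(y_i))=\tx{\tau(i)}$ is a rational function of all of $x_1,\dots,x_n$ and does not lie in the subfield $\imath(\bbF(y))=\bbF(x_{\tau(1)},\dots,x_{\tau(m)})$; take for instance $\Gamma=\Ga3$ and $S'=\{1,2\}$, where $\tx1$ visibly depends on $x_3$. What the specialization argument actually establishes is the geometric statement dual to this: the Kahan map of $\LV(\Gamma)$ preserves the coordinate subspace $V=\{x_j=0:j\notin S'\}\cong\bbF^m$ and restricts there to the Kahan map of $\LV(\Gamma')$. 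Since $\pb{x_i,x_j}=a_{i,j}x_ix_j$ lies in the ideal $(x_j:j\notin S')$ whenever $j\notin S'$, that ideal is a Poisson ideal, $V$ is a Poisson subvariety, and the restriction of a Poisson automorphism to it is again Poisson---which is the desired conclusion. Your final chain of equalities invokes the false identity $K\imath=\imath K'$ twice and should be replaced by this restriction argument; your phrase ``$\tx{\tau(i)}$ evaluated on the image of $\imath$'' is already pointing in the right direction but conflates the subfield $\imath(\bbF(y))\subset\bbF(x)$ with the subvariety $V\subset\bbF^n$.
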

One consequence of this proposition is immediate: if $\Gamma$ is a connected graph which satisfies the \KP, then up
to a sign all arcs have the same value.

We will also use the following lemma, which characterizes the graphs $\Gamma_n$. Recall that a \emph{tournament} is
an (unvalued) graph having (precisely) one arc between any pair of different vertices.

\begin{lemma}\label{lma:tournaments}
  Suppose that $\Gamma$ is a tournament with $n$ vertices, having the property that none of its 3-vertex induced
  subgraphs (triangles) is a circuit. Then $\Gamma\simeq\Gamma_n$.
\end{lemma}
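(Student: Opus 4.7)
\smallskip

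The plan is to show that the arc relation of $\Gamma$ is a strict total order on its vertex set, whence one obtains an isomorphism with $\Ga n$ by renumbering.

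First, I would establish transitivity: if $i\to j$ and $j\to k$ are arcs of $\Gamma$, then $i\to k$ is also an arc. Indeed, since $\Gamma$ is a tournament there is precisely one arc between $i$ and $k$; if that arc were $k\to i$ rather than $i\to k$, the induced subgraph on $\{i,j,k\}$ would be the circuit $i\to j\to k\to i$, contradicting the hypothesis. (Note that by assumption $i,j,k$ are pairwise distinct, since a tournament has no loops.)

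Define on the vertex set $S$ of $\Gamma$ the relation $i\prec j$ iff there is an arc $i\to j$ in $\Gamma$. Since $\Gamma$ is a tournament, exactly one of $i\prec j$ or $j\prec i$ holds for each pair $i\neq j$, so $\prec$ is irreflexive and total; the transitivity of $\prec$ is precisely the statement established in the previous paragraph. Hence $\prec$ is a strict total order on the finite set $S$, and we may label the vertices as $v_1\prec v_2\prec\cdots\prec v_n$. The bijection $v_i\mapsto i$ is then a graph isomorphism from $\Gamma$ to $\Ga n$, since for $i<j$ the relation $v_i\prec v_j$ translates into an arc from $v_i$ to $v_j$, matching the arc from $i$ to $j$ in $\Ga n$ (and since $\Gamma$ is an unvalued tournament in this lemma, there is no value to keep track of).

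There is essentially no obstacle: the argument is a standard triangle-free-implies-transitive reasoning for tournaments, and the only thing to be careful about is verifying that the three vertices $i,j,k$ appearing in the transitivity step are genuinely distinct (which they are, since tournaments have no loops and $i\to j$, $j\to k$ would otherwise force $i=k$, contradicting the existence of an arc $k\to i$ or $i\to k$).
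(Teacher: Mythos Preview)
Your proof is correct and follows essentially the same approach as the paper: define an order on the vertex set by the arc relation, verify transitivity via the no-circuit hypothesis, and conclude by the uniqueness of total orders on a finite set. The paper's argument is identical in substance, only differing in that it phrases the conclusion as ``all total orders on a set of $n$ elements are isomorphic'' rather than explicitly writing down the relabeling map.
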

\begin{proof}
Recall that, by definition, $\Gamma_n$ has vertices $1,2,\dots,n$ and has an arc from $i$ to $j$ (with value $1$)
when $i<j$.  For two vertices $v,w$ of $\Gamma=(S,A)$, let us write $v<w$ if there is an arc from $v$ to $w$. Then
$<$ is a total order relation on $S$: skew-symmetry is clear, as well as the fact that any pair of vertices is
comparable (since $\Gamma$ is a tournament), so we only need to prove transitivity. Let $u,v,w\in S$ and suppose
that $u<v$ and $v<w$. Then there is an arc from $u$ to $v$ and an arc from $v$ to $w$, so there cannot be an arc
from $w$ to $u$, since otherwise the subgraph induced by $\set{u,v,w}$ would be a circuit. Therefore, there is an
arc from $u$ to $w$ and $u<w$. Since all total orders on a set of $n$ elements are isomorphic, $\Gamma$ is
isomorphic to $\Gamma_n$ (by a unique isomorphism).
\end{proof}
As an immediate corollary of the lemma, we find that if $\Gamma$ is a tournament which has the KP
property, then $\Gamma\simeq\Gamma_n$, where $n$ is the number of vertices of $\Gamma$.

\subsection{The 4-dimensional case}\label{par:dim4}
We use Proposition \ref{prp:reduction} to prove Theorem~\ref{thm:main} when the (connected) graph $\Gamma$ has four
vertices. In view of Proposition \ref{prp:kahan_cloning}, we may suppose that $\Gamma$ is irreducible. Finally, we
may assume in view of the comment following Proposition \ref{prp:reduction} that all arcs of $\Gamma$ have
value~1, i.e., that $\Gamma$ is not valued. We show that the only such graph~$\Gamma$ which has the \KP\ property
is $\Gamma_4$.

We know from Proposition \ref{prp:reduction} that if we remove a vertex from $\Gamma$ then the remaining graph
$\Gamma'$ should have the \KP, so according to Section~\ref{sec:known_dim3} it is trivial or it has either
\begin{enumerate}
  \item[(I)] A single arc (in which case the graph is disconnected);
  \item[(II)] Two arcs, both starting from -- or ending in -- the same vertex;
  \item[(III)] Three arcs which do not form a circuit.
\end{enumerate}
We first show that any connected irreducible 4-vertex graph having one of these three properties is -- up to a reversal
of the direction of all arcs -- isomorphic to one of the three graphs in Figure \ref{fig:dim4}.
\begin{figure}[h]
{\footnotesize
    \begin{tikzpicture}[->,shorten >=1pt,auto,node distance=1cm,thick,
                                 main node/.style={circle,draw,font=\bfseries}]
      \node[main node] (a) {};
      \node[main node] (v) [below right of=a] {};
      \node[main node] (b) [above right of=v] {};
      \node[main node] (c) [below of=v] {};
      \path
      (a) edge (b) edge (c) edge (v)
      (b) edge (c) edge (v)
      (c) edge (v);
\end{tikzpicture}}
\qquad\qquad
{\footnotesize
    \begin{tikzpicture}[->,shorten >=1pt,auto,node distance=1cm,thick,
                                 main node/.style={circle,draw,font=\bfseries}]
      \node[main node] (a) {};
      \node[main node] (v) [below right of=a] {};
      \node[main node] (b) [above right of=v] {};
      \node[main node] (c) [below of=v] {};
      \path
      (a) edge (b) edge (v)
      (b) edge (v)
      (c) edge (v);
\end{tikzpicture}}
\qquad\qquad
{\footnotesize
    \begin{tikzpicture}[->,shorten >=1pt,auto,node distance=1cm,thick,
                                 main node/.style={circle,draw,font=\bfseries}]
      \node[main node] (a) {};
      \node[main node] (v) [below right of=a] {};
      \node[main node] (b) [above right of=v] {};
      \node[main node] (c) [below of=v] {};
      \path
      (a) edge (b) edge (v) 
      (c) edge (v);
\end{tikzpicture}}
\caption{Up to isomorphism and modulo a reversal of all arcs, there are only three connected irreducible
  4-vertex graphs for which every 3-vertex induced subgraph has the \KP. The first one pictured is
  $\Gamma_4$.\label{fig:dim4}}
\end{figure}
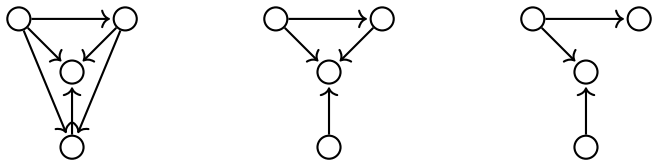
Since $\Gamma$ is connected, $\Gamma$ contains at least one vertex of degree two or three. More precisely, there
are the following three (disjoint) possibilities:
\begin{enumerate}
  \item[(i)] All vertices of $\Gamma$ have degree three;
  \item[(ii)] $\Gamma$ has a vertex of degree three and three vertices of degree one;
  \item[(iii)] $\Gamma$ has a vertex of degree $2$.
\end{enumerate}
We first consider the case (i). Then $\Gamma$ is a tournament which does not contain a circuit. According to
Lemma~\ref{lma:tournaments}, $\Gamma\simeq\Gamma_4$. It corresponds to the first graph in Figure \ref{fig:dim4}.

We next consider the case (ii), so $\Gamma$ has one vertex of degree three and three vertices of degree
one. If one removes any vertex of degree one, the resulting 3-vertex graph must be of type
(II), hence the three arcs must be pointing toward the vertex of degree three, or away from
it. In any case, $\Gamma$ is reducible.

We now consider the case (iii), in which we will need to consider several subcases. By assumption, $\Gamma$ has a
vertex $v$ of degree two. We call $\Gamma'$ the graph obtained by removing $v$ from~$\Gamma$ (together with the
arcs incident with~$v$). Since $\Gamma$ is connected, $\Gamma'$ is non-trivial, so it is either of type (I), (II)
or (III).  We analyse each of them separately. We start with type (III) and distinguish three cases, according to
whether or not $v$ is connected to the unique vertex with in and outdegree 1 (called $b$ in Figure
\ref{fig:dim4_1}). In each one of these cases, there is a unique way to add the other arc(s) incident to $v$; the
latter arc is indicated in blue. It is clear from the figure that each one of these cases is reducible: the two
vertices which are not connected have the same (in and out) neighbors.
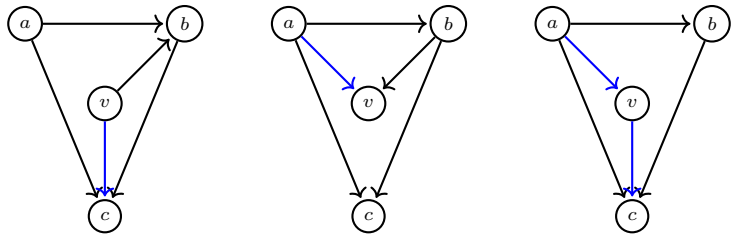
\begin{figure}[h]
{\tiny
    \begin{tikzpicture}[->,shorten >=1pt,auto,node distance=1.5cm,thick,
                                 main node/.style={circle,draw,font=\bfseries}]
      \node[main node] (a) {$a$};
      \node[main node] (v) [below right of=a] {$v$};
      \node[main node] (b) [above right of=v] {$b$};
      \node[main node] (c) [below of=v] {$c$};
      \path
      (a) edge (b) edge (c)
      (b) edge (c)
      (v) edge (b);
      \color{blue} \path (v) edge (c);
\end{tikzpicture}}
\qquad
{\tiny
    \begin{tikzpicture}[->,shorten >=1pt,auto,node distance=1.5cm,thick,
                                 main node/.style={circle,draw,font=\bfseries}]
      \node[main node] (a) {$a$};
      \node[main node] (v) [below right of=a] {$v$};
      \node[main node] (b) [above right of=v] {$b$};
      \node[main node] (c) [below of=v] {$c$};
      \path
      (a) edge (b) edge (c)
      (b) edge (c) edge (v);
      \color{blue} \path (a) edge (v);
\end{tikzpicture}}
\qquad
{\tiny
    \begin{tikzpicture}[->,shorten >=1pt,auto,node distance=1.5cm,thick,
                                 main node/.style={circle,draw,font=\bfseries}]
      \node[main node] (a) {$a$};
      \node[main node] (v) [below right of=a] {$v$};
      \node[main node] (b) [above right of=v] {$b$};
      \node[main node] (c) [below of=v] {$c$};
      \path
      (a) edge (b) edge (c) 
      (b) edge (c);
      \color{blue} \path (a) edge (v) (v) edge (c);
\end{tikzpicture}}
\caption{When $\Gamma'=\Gamma_3$ and $v$ has order two, there are only three graphs $\Gamma$ satisfying (III). Each
  one of them is reducible.\label{fig:dim4_1}}

\end{figure}
%

We now consider the case in which $\Gamma'$ is of type (II). Modulo a reversal of the direction of all arcs, we may
assume that both arcs of $\Gamma'$ are ending in the same vertex $b$. There are again three possible cases: if
there is an arc between $b$ and $v$ it must be from $v$ to $b$, in view of (II), and we may assume by symmetry that
the other arc is between $c$ and $v$; the direction of the arc between $c$ and $v$ is irrelevant, up to
isomorphism. This gives the first case in Figure \ref{fig:dim4_2}. If there is no arc between $b$ and $v$, both
arcs incident with $v$ must either start from $v$ or end in $v$, which leads to the other two cases in Figure
\ref{fig:dim4_2}. The first case is the second graph in Figure \ref{fig:dim4}, while the other two cases are
reducible.

\begin{figure}[h]
{\tiny
    \begin{tikzpicture}[->,shorten >=1pt,auto,node distance=1.5cm,thick,
                                 main node/.style={circle,draw,font=\bfseries}]
      \node[main node] (a) {$a$};
      \node[main node] (b) [right of=a] {$b$};
      \node[main node] (c) [right of=b] {$c$};
      \node[main node] (v) [below of=b] {$v$};
      \path
      (a) edge (b) 
      (c) edge (b)
      (v) edge (b);
      \color{blue} \path (v) edge (c);
\end{tikzpicture}}
\qquad
{\tiny
    \begin{tikzpicture}[->,shorten >=1pt,auto,node distance=1.5cm,thick,
                                 main node/.style={circle,draw,font=\bfseries}]
      \node[main node] (a) {$a$};
      \node[main node] (b) [right of=a] {$b$};
      \node[main node] (c) [right of=b] {$c$};
      \node[main node] (v) [below of=b] {$v$};
      \path
      (a) edge (b) 
      (c) edge (b)
      (v) edge (a);
      \color{blue} \path (v) edge (c);
\end{tikzpicture}}
\qquad
{\tiny
    \begin{tikzpicture}[->,shorten >=1pt,auto,node distance=1.5cm,thick,
                                 main node/.style={circle,draw,font=\bfseries}]
      \node[main node] (a) {$a$};
      \node[main node] (b) [right of=a] {$b$};
      \node[main node] (c) [right of=b] {$c$};
      \node[main node] (v) [below of=b] {$v$};
      \path
      (a) edge (b) edge (v)
      (c) edge (b);
      \color{blue} \path (c) edge (v);
\end{tikzpicture}}
\caption{When $\Gamma'$ is of type (II) there are up to isomorphism and reversal of all arcs only three
  possibilities. The second and third ones are reducible, while the first one is the second graph in Figure
  \ref{fig:dim4}.\label{fig:dim4_2}}
\end{figure}
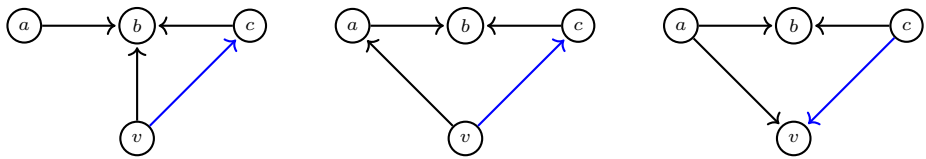
The final case to be considered is when $\Gamma'$ is of type (I). Let us call $c$ the isolated vertex in~$\Gamma'$;
since $\Gamma$ is connected, there must be an arc between $v$ and $c$, which by reversing all arcs of $\Gamma$
may be assumed to be from $c$ to $v$. Let the labeling of the other vertices be such that the unique arc in
$\Gamma'$ is from $a$ to $b$. Then, in view of (II), the third arc of $\Gamma$ must be from $a$ to~$v$. Then
$\Gamma$ is the third graph in Figure \ref{fig:dim4}.

We have now found all possible connected irreducible 4-vertex graphs for which every 3-vertex induced subgraph has
the \KP. To show that $\Ga4$, which is the first graph in Figure \ref{fig:dim4}, is the only one having the \KP,
we need to show that none of the other two graphs in Figure \ref{fig:dim4} has the \KP. We do this by using the
method described in Section~\ref{par:compute_poisson}. We do this only for one of the graphs, as both graphs are
very similar, and hence also the computations to be done.

We consider the last graph in Figure \ref{fig:dim4} and label the vertices as follows:
\begin{figure}[h]
  {\small
    \begin{tikzpicture}[scale=2,->,shorten >=1pt,auto,thick,node distance=1.5cm,
                                 main node/.style={circle,draw,font=\bfseries}]
      \node[main node] (1) {$1$};
      \node[main node] (2) [right of=1] {$2$};
      \node[main node] (3) [right of=2] {$3$};
      \node[main node] (4) [right of=3] {$4$};
      \path
      (1) edge (2)
      (3) edge (2) edge (4); 
\end{tikzpicture}\;.}
\end{figure}

\noindent We choose the point $P:=(-1,1,-1,1)$. Then $M(P)$ and its inverse are given by
\begin{equation*}
  M(P)=  \begin{pmatrix}
    0&1&0&0\\
    1&-1&1&0\\
    0&1&-1 &1\\
    0&0&1&0
  \end{pmatrix},\quad
  M(P)^{-1}=
  \begin{pmatrix}
    1&1&0&-1\\
    1&0&0&0\\
    0&0&0&1\\
    -1&0&1&1
  \end{pmatrix}\;.
\end{equation*}
It follows that  $\tP=(-1,-1,1,1)$. Evaluating $\tx 1,\dots,\tx 4$ at $P$ we get 
\begin{equation*}
  \cL(\tP)=\begin{pmatrix}
    1-x_2\\
    1+x_1+x_2+x_3\\
    -1+x_2+x_3+x_4\\
    -1-x_3
  \end{pmatrix}\;,
\end{equation*}%
so that
\begin{equation*}
  \pb{\tx1,x_1}(P)=\pb{\tx1,x_3}(P)=0\;,  \pb{\tx1,x_2}(P)=-3\;, \pb{\tx1,x_4}(P)=-2\;.
\end{equation*}%
From these values we find that $\pb{\tx1,\tx2}(P)=3$, so that $\pb{\tx1,\tx2}(P)-\tx1(P)\tx2(P)=2$, which proves
that the last graph in Figure \ref{fig:dim4} does not have the \KP.

\subsection{The higher-dimensional case}\label{par:dimn}
We are now ready to prove Theorem~\ref{thm:main} for graphs with more than four vertices.

We proceed by
contradiction: we assume that $\Gamma$ is a graph with $n>4$ vertices which is not a tournament, is connected,
irreducible and has the \KP, and show that this leads to a contradiction. First notice that~$\Gamma$ has at least
two vertices at distance 2; this is easily seen by considering a shortest chain between any pair of non-adjacent
vertices: the first and third vertex of such a chain are at distance 2. Let $s$ and $t$ be two vertices at distance
2 and let $u$ be any vertex adjacent to both $s$ and $t$. According to~(II), the arcs between $u$ on the one hand
and $s$ and $t$ on the other hand must both be starting from $u$ or ending in $u$. By reversing all arcs if
needed, we may assume that the subgraph of $\Gamma$ induced by the vertices $s,t$ and $u$ is given by
\begin{figure}[h]
  {\small
    \begin{tikzpicture}[scale=2,->,shorten >=1pt,auto,node distance=1.5cm,thick,
                                 main node/.style={circle,draw,font=\bfseries}]
      \node[main node] (s) {$s$};
      \node[main node] (u) [right of=s] {$u$};
      \node[main node] (t) [right of=u] {$t$};
      \path
      (u) edge (s) edge (t); 
    \end{tikzpicture}\;.}
\end{figure}

Since $\Gamma$ is irreducible, $s$ and $t$ cannot have the same neighbors. By the symmetry in $s$ and $t$, we may
suppose that $t$ has an (in or out) neighbor, say $v$, which is not an (in or out) neighbor of $s$. In fact, in
either case, $v$ cannot be adjacent to $s$ because of~(II), applied to the subgraph induced by $\set{s,t,v}$.
This leads to three cases, depending on the direction of the arc between $v$ and $t$ and on whether or not there is
an arc between $u$ and $v$; notice that if there is an arc between $u$ and $v$, it must go from $u$ to $v$, again
because of (II), applied to the subgraph induced by $\set{s,u,v}$. The three cases are displayed in Figure
\ref{fig:dim4_3}.

\begin{figure}[h]
{\tiny
    \begin{tikzpicture}[->,shorten >=1pt,auto,node distance=1.5cm,thick,
                                 main node/.style={circle,draw,font=\bfseries}]
      \node[main node] (v) {$v$};
      \node[main node] (s) [below left of=v] {$s$};
      \node[main node] (t) [below right of=v] {$t$};
      \node[main node] (u) [below left of=t] {$u$};
      \path
      (u) edge (s) edge (t) 
      (v) edge (t);
\end{tikzpicture}}
\qquad
{\tiny
    \begin{tikzpicture}[->,shorten >=1pt,auto,node distance=1.5cm,thick,
                                 main node/.style={circle,draw,font=\bfseries}]
      \node[main node] (v) {$v$};
      \node[main node] (s) [below left of=v] {$s$};
      \node[main node] (t) [below right of=v] {$t$};
      \node[main node] (u) [below left of=t] {$u$};
      \path
      (u) edge (s) edge (t) edge (v)
      (v) edge (t);
\end{tikzpicture}}
\qquad
{\tiny
    \begin{tikzpicture}[->,shorten >=1pt,auto,node distance=1.5cm,thick,
                                 main node/.style={circle,draw,font=\bfseries}]
      \node[main node] (v) {$v$};
      \node[main node] (s) [below left of=v] {$s$};
      \node[main node] (t) [below right of=v] {$t$};
      \node[main node] (u) [below left of=t] {$u$};
      \path
      (u) edge (s) edge (t) edge (v)
      (t) edge (v);
\end{tikzpicture}}
\caption{Under the above assumptions, $\Gamma$ must contain an induced subgraph of the above
  type. \label{fig:dim4_3}}
\end{figure}
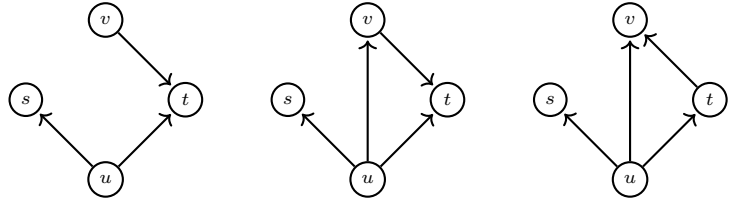
The first graph in Figure \ref{fig:dim4_3} is isomorphic to the third graph in Figure \ref{fig:dim4}, while the
other two are isomorphic to the second graph in that figure. We have shown that these graphs do not have the \KP,
leading to a contradiction. We have therefore proven Theorem~\ref{thm:main} for all $n$.

\section{Deformed Lotka-Volterra systems}\label{sec:LV_deformed}
In this section we consider deformations of Lotka-Volterra systems, associated with diagonal Poisson brackets which
are deformed by constants. We first introduce and study these deformed Poisson brackets, which are associated with
augmented graphs, and define the \KP\ for such graphs. We then characterize these graphs showing that they are
closely related to the graphs $\Ga n$ and their clonings.

\subsection{Deformed diagonal Poisson brackets}

We consider in this section deformations of diagonal Poisson brackets with constant terms; they lead to
Hamiltonian systems which are deformations of Lotka-Volterra systems, and which we will study in the next
subsection.
\begin{defn}
  A Poisson bracket $\PBb$ on $\bbF(x)=\bbF(x_1,\dots,x_n)$ is said to be a \emph{diagonal Poisson bracket,
    deformed by constants}, or simply, a \emph{deformed diagonal Poisson bracket}, if for all $1\leqslant
  i,j\leqslant n$, the Poisson brackets $\pbb{x_i,x_j}$ are of the form $\pbb{x_i,x_j}=a_{i,j}x_ix_j+b_{i,j}$,
  where $a_{i,j},b_{i,j}\in\bbF$.  The matrix $B=(b_{i,j})$ is called a \emph{deformation matrix of $A$ (or of
    $\PB$)}.
\end{defn}
The conditions on a matrix $B$ for it to be a deformation matrix of a diagonal Poisson structure are given in the
following proposition.
\begin{prop}\label{prp:comp}
  Let $A=(a_{i,j})$ and $B=(b_{i,j})$ be two $n\times n$ skew-symmetric matrices. Then~$B$ is a deformation matrix
  of $A$ if and only if
  \begin{equation}\label{eq:comp_cond}
    b_{i,j}(a_{i,k}+a_{j,k})=0\;,\quad \text{for all} \quad i,j,k\in S \quad \text{with} \quad i\neq j\neq k \neq i\,.
  \end{equation}
\end{prop}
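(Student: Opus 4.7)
The plan is to reduce the question to checking the Jacobi identity for the bracket defined on generators, then to carry out the expansion. The bracket $\PBb$ defined by $\pbb{x_i,x_j}=a_{i,j}x_ix_j+b_{i,j}$ extends uniquely to a skew-symmetric biderivation on $\bbF(x)$; by the Leibniz rule, the Jacobiator $J(f,g,h):=\pbb{f,\pbb{g,h}}+\pbb{g,\pbb{h,f}}+\pbb{h,\pbb{f,g}}$ is a triderivation in $(f,g,h)$, so $\PBb$ is a Poisson bracket if and only if $J(x_i,x_j,x_k)=0$ for all triples of generators. When two of the indices coincide, $J$ vanishes by skew-symmetry; hence it suffices to treat distinct triples $(i,j,k)$.

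For such a triple, I would expand $J(x_i,x_j,x_k)$ using the Leibniz rule; each term $\pbb{x_i,\pbb{x_j,x_k}}$ splits as $a_{j,k}\pbb{x_i,x_jx_k}+\pbb{x_i,b_{j,k}}$, and the second summand vanishes since $b_{j,k}$ is constant. This produces two pieces: a degree-three piece, homogeneous of tridegree $(1,1,1)$ in $x_i x_j x_k$, arising from the quadratic parts $a_{\cdot,\cdot}x_\cdot x_\cdot$; and a degree-one piece, linear in the $x_\ell$, coming from the cross terms involving one constant $b_{\cdot,\cdot}$ and one quadratic term. The cubic piece equals $J_0(x_i,x_j,x_k)$ for the undeformed bracket $\pb{x_i,x_j}=a_{i,j}x_ix_j$, and a short skew-symmetry computation shows that the cubic coefficient $a_{j,k}(a_{i,j}+a_{i,k})+a_{k,i}(a_{j,k}+a_{j,i})+a_{i,j}(a_{k,i}+a_{k,j})$ collapses to zero; so the cubic piece drops out automatically (this just reflects the well-known fact that every skew $A$ yields a Poisson bracket on the quadratic side).

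It remains to analyse the linear piece. A direct bookkeeping from the expansion gives
\begin{equation*}
J(x_i,x_j,x_k)=\bigl(a_{k,i}b_{j,k}+a_{i,j}b_{k,j}\bigr)x_i+\bigl(a_{j,k}b_{i,k}+a_{i,j}b_{k,i}\bigr)x_j+\bigl(a_{j,k}b_{i,j}+a_{k,i}b_{j,i}\bigr)x_k,
\end{equation*}
and using skew-symmetry of $A$ and $B$ each coefficient simplifies to $-b_{p,q}(a_{p,r}+a_{q,r})$ for a suitable labelling of $(p,q,r)$ as a permutation of $(i,j,k)$. Since $x_i,x_j,x_k$ are algebraically independent, vanishing of $J(x_i,x_j,x_k)$ is equivalent to the three relations $b_{p,q}(a_{p,r}+a_{q,r})=0$ where $(p,q,r)$ runs over the cyclic permutations of $(i,j,k)$; but as $(i,j,k)$ itself varies over all distinct triples, these three relations are the single family of conditions \eqref{eq:comp_cond}. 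This proves both implications at once: the Jacobi identity on generators is equivalent to \eqref{eq:comp_cond}.

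The only real work is the sign- and index-tracking in the expansion, and the mildly annoying step of recognising that the three linear-coefficient conditions coming from fixed $(i,j,k)$ reduce, after symmetrisation over all triples, to the single stated condition; no genuine obstacle is expected, since the cubic cancellation is standard and the linear terms are manifestly linear in the $b_{\cdot,\cdot}$'s.
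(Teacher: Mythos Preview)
Your proof is correct and follows essentially the same route as the paper's: both reduce the Jacobi identity to the generators, use that the undeformed diagonal bracket $\PB$ is already Poisson to kill the cubic term, and then read off the three linear coefficients as the conditions \eqref{eq:comp_cond}. (With your convention $J(f,g,h)=\pbb{f,\pbb{g,h}}+\cdots$, the coefficient of $x_r$ is actually $+b_{p,q}(a_{p,r}+a_{q,r})$ rather than $-$, but this sign is of course immaterial to the vanishing condition.)
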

\begin{proof}
We define a biderivation $\PBb$ of $\bbF(x)$, by $\pbb{x_i,x_j}:=a_{i,j}x_ix_j+b_{i,j}$ for $1\leqslant
i,j\leqslant n$.  By definition, $B$ is a deformation matrix of $A$ if and only if $\PBb$ satifies the Jacobi
identity, i.e.,
$$
  \pbb{\pbb{x_i,x_j},x_k}+\cycl(i,j,k)=0, \quad \text{for all} \quad 1\leqs i<j<k\leqs n\;,
$$
where $\cycl(i,j,k)$ means cyclic permutation of the indices $i, j, k$. Using the fact that $\PBb$ is a
biderivation and that $\PB$ is a Poisson bracket, this is equivalent to
\begin{align*}
  0&=a_{i,j}\pbb{x_i x_j,x_k}+\cycl(i,j,k)=a_{i,j}b_{j,k}x_i+a_{i,j}b_{i,k}x_j+\cycl(i,j,k)\\
  &=a_{k,i}b_{i,j}x_k+a_{j,k}b_{j,i}x_k+\cycl(i,j,k)=b_{i,j}(a_{k,i}+a_{k,j})x_k+\cycl(i,j,k)\,,
\end{align*}
which in turn amounts to the condition that $b_{i,j}(a_{i,k}+a_{j,k})=0$ whenever the indices $i,j,k$ are
different, which is precisely \eqref{eq:comp_cond}.
\end{proof}
Condition \eqref{eq:comp_cond} can be stated equivalently by saying that the $(i,j)$-th entry $b_{i,j}$ of $B$ is
zero whenever there exists an index $k$, different from $i$ and~$j$, with $a_{i,k}+a_{j,k}\neq0$. It follows that,
given $A$, there are two types of pairs of distinct indices $(i,j)$, depending on whether or not
$a_{i,k}+a_{j,k}=0$ for all~$k$, different from $i$ and~$j$; in the positive case, one can assign any value
to~$b_{i,j}$, while that value must be zero in the negative case, for the constants~$b_{i,j}$ to define a
deformation matrix of $A$.

The condition that $B$ is a deformation matrix of $A$ can easily be read off from the skew-symmetric graph
$\Gamma=(S,A)$, associated to $A$.  Given three different vertices $i,j,k\in S$, we say that $k$ is an
\emph{opposite neighbor} of $i$ and $j$ if the arcs from $i$ to $k$ and from $j$ to $k$ have opposite values,
$a_{i,k}+a_{j,k}=0$; the vertices $i$ and $j$ are said to have \emph{opposite~neighborhoods} if every other vertex
$k$ is an opposite neighbor of $i$ and $j$.  On a picture, representing $\Gamma$, we will add a green arc from $i$ to
$j$ when $i$ and $j$ have opposite neighborhoods and, say, $i<j$. By the above, this indicates that if one puts any
values at the positions in the $B$-matrix which correspond to green arcs, and zeros at all other positions, then $B$
is a deformation matrix of $A$ and all deformation matrices of $A$ are obtained in this way. We call any triplet
$\Delta=(S,A,B)$ with $B$ a deformation matrix of $A$ an \emph{augmentation of $\Gamma$} and refer to $\Delta$ as
an \emph{augmented graph}. A green arc from $i$ to~$j$ may be labeled with the value $b_{i,j}$ but that will not be
needed in what follows.

\begin{example}
Recall that for the graph $\Ga n$ there is an arc from $i$ to $j$ if and only if $i<j$. The vertices $1$ and $n$
have opposite neighborhoods and are the only vertices with this property. Therefore, the only possible non-zero
entries of the deformation matrix $B$ are $b_{1,n}=-b_{n,1}$. See the first picture in Figure
\ref{fig:augmented_graphs} below for the case of $n=6$.
\end{example}

\begin{example}\label{exa:delta_n}
We denote by $C_n$ the graph with $n$ vertices $S=\set{1,2,\dots,n}$ and an arc of value~1 from $i$ to $i+1 \mod n$
for $i\in S$. When $n=3$, any two vertices have opposite neighborhoods; when $n=4$, the vertices $1$ and~$3$ have
opposite neighborhoods, as well as the vertices $2$ and $4$; when $n>4$ no two vertices have opposite
neighborhoods.  See the second picture in Figure \ref{fig:augmented_graphs} for the case of $n=4$.
\end{example}

\begin{figure}[h]
\begin{tikzpicture}[->,shorten >=1pt,auto,node distance=2cm,
                thick,main node/.style={circle,draw,font=\bfseries}]
  \node[main node] (1) {$1$};
  \node[main node] (2) [right of=1] {$2$};
  \node[main node] (3) [below right of=2] {$3$};
  \node[main node] (4) [below left of=3] {$4$};
  \node[main node] (5) [left of=4] {$5$};
  \node[main node] (6) [below left of=1] {$6$};
  \path
    (1) edge (2) edge(3) edge (4) edge (5) edge (6)
    (2) edge (3) edge (4) edge (5) edge (6)
    (3) edge (4) edge (5) edge (6)
    (4) edge (5) edge (6)
    (5) edge (6)
  (1) edge[green,bend right] (6) ;
\end{tikzpicture}
\qquad\qquad
\begin{tikzpicture}[->,shorten >=1pt,auto,node distance=2cm,
                thick,main node/.style={circle,draw,font=\bfseries}]
  \node[main node] (1) {$1$};
  \node[main node] (2) [right of=1] {$2$};
  \node[main node] (3) [below of=2] {$3$};
  \node[main node] (4) [below of=1] {$4$};
  \path
    (1) edge (2)
    (2) edge (3)
    (3) edge (4)
    (4) edge (1)
    (1) edge[green] (3)
    (2) edge[green] (4);
\end{tikzpicture}
\caption{Augmented graphs of $\Ga 6$ and $C_4$. The green arcs join vertices with opposite
  neighborhoods.\label{fig:augmented_graphs}}
\end{figure}
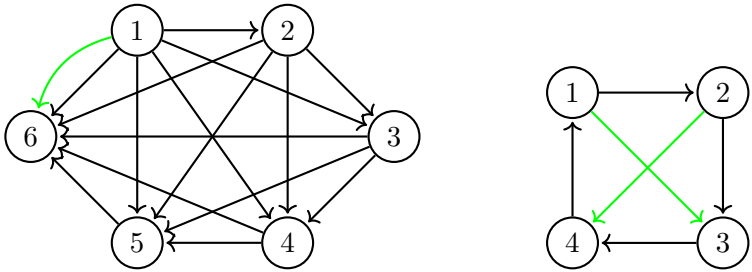
We show in the next lemma how the augmentations of a graph and of its clonings are related.
\begin{lemma}\label{lma:augmented}
  Let $\Gamma$ be a skew-symmetric graph and let $\wght$ be a weight vector for $\Gamma$. Two vertices $(i,k)$
  and $(j,\ell)$ of the cloned graph $\Gamma^\wght$ have opposite neighborhoods if and only if the following two
  conditions are fulfilled:
  \begin{enumerate}
    \item The vertices $i$ and $j$ have opposite neighborhoods in $\Gamma$;
    \item If there is an arc between $i$ and $j$ in $\Gamma$, then $\wght(i)=\wght(j)=1$.
  \end{enumerate}
\end{lemma}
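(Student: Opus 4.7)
The plan is to unpack the definition of opposite neighborhoods directly in $\Gamma^\wght$ and compare it term-by-term with the definition in $\Gamma$, using the fact that $a^\wght_{(i,k),(m,p)}=a_{i,m}$ (independently of $k,p$) whenever $i\neq m$, and $a^\wght_{(i,k),(i,p)}=0$. Fix two distinct vertices $(i,k)$ and $(j,\ell)$ of $\Gamma^\wght$. By definition they have opposite neighborhoods if and only if
\[
  a^\wght_{(i,k),(m,p)}+a^\wght_{(j,\ell),(m,p)}=0
\]
for every vertex $(m,p)$ of $\Gamma^\wght$ different from $(i,k)$ and $(j,\ell)$. I would assume $i\neq j$ (the case $i=j$ is separate and essentially forces $i$ to be isolated, which makes both conditions in the lemma vacuously true or false together) and split the verification into three cases, according to whether $m\notin\{i,j\}$, $m=i$, or $m=j$.

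First, for $m\notin\{i,j\}$ the condition becomes $a_{i,m}+a_{j,m}=0$, independently of $p$. Since every such $m$ appears as the first index of some $(m,p)\in S^\wght$, quantifying over these indices yields exactly the statement that $m$ is an opposite neighbor of $i$ and $j$ in $\Gamma$ for every $m\neq i,j$. This is precisely condition~(1).

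Next, for $m=i$ the vertices to be considered are $(i,p)$ with $p\neq k$, which exist if and only if $\wght(i)\geqs 2$. In that case $a^\wght_{(i,k),(i,p)}=0$ and $a^\wght_{(j,\ell),(i,p)}=a_{j,i}$, so the requirement reduces to $a_{j,i}=0$. Symmetrically, for $m=j$ with $\wght(j)\geqs 2$ we obtain $a_{i,j}=0$. Both cases thus state: whenever at least one of $\wght(i),\wght(j)$ is $\geqs 2$, there is no arc between $i$ and $j$. The contrapositive is condition~(2), that the existence of an arc between $i$ and $j$ forces $\wght(i)=\wght(j)=1$. Collecting the three cases gives the equivalence in both directions.

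The argument is essentially bookkeeping; there is no genuine obstacle, only the minor subtlety of recognizing that the contributions from $(m,p)$ with $m\in\{i,j\}$ are \emph{conditional} on $\wght(i)$ or $\wght(j)$ exceeding $1$ (if $\wght(i)=1$ the case $m=i$ with $p\neq k$ is vacuous), which is precisely what turns an otherwise unconditional requirement $a_{i,j}=0$ into the implication stated in~(2).
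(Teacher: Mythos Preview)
Your proof is correct and follows essentially the same idea as the paper's: both reduce the opposite-neighborhood condition in $\Gamma^\wght$ to the values $a_{i,m}+a_{j,m}$ in $\Gamma$, treating separately the test vertices with first index outside $\{i,j\}$ (which recovers condition~(1)) and those with first index in $\{i,j\}$ (which recovers condition~(2)). The paper organizes this as separate sufficiency and necessity arguments (a chain of equivalences for the forward direction, then contrapositives of (1) and (2) for the converse), whereas you do a single bidirectional case split; your presentation is slightly more economical, but the content is the same. Your parenthetical handling of the degenerate case $i=j$ is also correct and is not treated more carefully in the paper.
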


\begin{proof}
As before, we let $\Gamma=(S,A)$ with $A=(a_{i,j})$. Let $i,j$ be two vertices of $\Gamma$ and let us assume that
the above conditions (1) and (2) are satisfied. Let $(r,s)$ be a vertex of $\Gamma^\wght$, which is different from
some given distinct vertices $(i,k)$ and $(j,\ell)$ of $\Gamma^\wght$. Then
\begin{align*}
  &\hbox{There is an arc  $(i,k)\to(r,s)$ in $\Gamma^\wght$ with value $a\neq0$}\\
  &\qquad\Longleftrightarrow\hbox{There is an arc  $i\to r$ in $\Gamma$ with value $a\neq0$}\\
  &\qquad\Longleftrightarrow\hbox{There is an arc  $j\to r$ in $\Gamma$ with value $-a\neq0$}\\
  &\qquad\Longleftrightarrow\hbox{There is an arc  $(j,\ell)\to(r,s)$ in $\Gamma^\wght$ with value $-a\neq0$}\;,
\end{align*}
which means that $(r,s)$ is an opposite neighbor to $(i,k)$ and $(j,\ell)$; this shows that $(i,k)$ and $(j,\ell)$
have opposite neighborhoods. The second equivalence is a direct consequence of (1) when $r\neq i,j$, but needs some
explanation when $r=i$ or $r=j$: it is clearly valid when there is no arc between $i$ and $j$, but when there is an
arc between $i$ and $j$ then according to (2), $\wght(i)=\wght(j)=1$, so that in fact $r\neq i$ and $r\neq j$. This
shows that the conditions (1) and (2) are sufficient.

We now show that these conditions are also necessary. If (1) does not hold, then there exists a vertex $r$ of
$\Gamma$ which is not an opposite neighbor of $i$ and $j$. Then for any $k,\ell$, the vertex $(r,1)$ is not an
opposite neighbor of $(i,k)$ and $(j,\ell)$, so that $(i,k)$ and $(j,\ell)$ do not have opposite neighborhoods. If
(2) does not hold, there is an arc from $i$ to $j$, with value $a\neq0$, but $\wght(i)\cdot\wght(j)>1$, say
$\wght(i)>1$. Let $(i,m)$ be a vertex of $\Gamma^\wght$, with $k\neq m$. Then there is an arc from
$(i,m)$ to $(j,\ell)$ with value $a\neq0$, but there is no arc between $(i,k)$ and $(i,m)$. It follows
again that $(i,k)$ and $(j,\ell)$ do not have opposite neighborhoods.
\end{proof}

\begin{example}\label{ex:Gamma_n_clon_def}
The previous lemma, applied to the weighted graph $(\Gamma_n,\wght)$, shows that $\Gamma_n^\wght$ can only have a
pair of vertices with opposite neighborhoods when $\wght(1)=\wght(n)=1$; in this case, $(1,1)$ and $(n,1)$ have
opposite neighborhoods and is the only pair of vertices with this property. See Figure~\ref{fig:cloned_Gamma_n}.
\end{example}

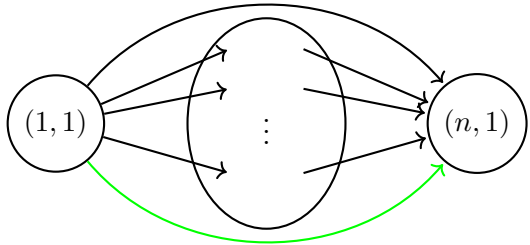
\begin{figure}[h]
\begin{tikzpicture}[->,shorten >=1pt,auto,node distance=2cm,
                thick,main node/.style={circle,draw,font=\bfseries},scale=0.7]
  \node[main node] (1) {$(1,1)$};
  \node[main node] (2) at (8,0) {$(n,1)$};
  \node (3) at (3.5,1.5) {};
  \node (4) at (3.5,.7) {};
  \node (5) at (4,0) {$\vdots$};
  \node (6) at (3.5,-1) {};

  \node (7) at (4.5,1.5) {};
  \node (8) at (4.5,.7) {};
  \node (10) at (4.5,-1) {};

  \draw (4,0) ellipse (1.5cm and 2cm);

  \path
    (1)  edge[bend left=50] (2)
    (1) edge[green,bend right=50] (2)
    (1)  edge (3) edge (4) edge (6)
    (7)  edge (2)
    (8)  edge (2)
    (10)  edge (2);
\end{tikzpicture}
 \caption{When $\wght(1)=\wght(n)=1$ the vertices $(1,1)$ and $(n,1)$ of $\Gamma_n^\wght$ are the only vertices with
   opposite neighborhoods.\label{fig:cloned_Gamma_n}}
\end{figure}

\subsection{The Kahan-Poisson property for deformations of Lotka-Volterra systems}\label{subsec:LV_defomations}
We show in this subsection that the deformations with constant terms of the Lotka-Volterra systems~$\LV(\Gamma)$
for which their Kahan map is a Poisson map with respect to the corresponding deformed Poisson structure, are
precisely those for which their underlying graph $\Gamma$ has the \KP. According to Theorem~\ref{thm:main}, this
means that every connected component of $\Gamma$ is isomorphic to $\ga\Ga n$ for some $\ga\in\bbF^*$ and
$n\in\bbN^*$.

We first recall from \cite{kahan} the recipe of the Kahan discretization of a general class of systems of
differential equations which covers the deformations of Lotka-Volterra systems that we consider. To this end,
consider a system of differential equations on $\bbF^n$ of the form
\begin{equation}\label{eq:def_sys}
  \dot{x}_i=Q_i(x)+c_i\;, \quad i=1,2,\dots, n\,,
\end{equation}
where $x=(x_1, x_2, \dots, x_n),\, Q_i$ is a quadratic form and $c_i\in\bbF$. If we denote by $B_i$ the bilinear
form, corresponding to $Q_i$, so that $Q_i(x)=B_i(x,x)$, then the \emph{Kahan discretization of \eqref{eq:def_sys}}
is given by
\begin{equation}\label{eq:def_kah_impl}
  \frac{\tx i-x_i}{2\varepsilon}=B_i(\tx,x)+c_i\;, \quad i=1,2,\dots,n\,.
\end{equation}
Solving \eqref{eq:def_kah_impl} linearly for $\tx 1, \tx 2, \dots, \tx n$ we get a family of birational maps on
$\bbF^n$, parametrized by the step size $\varepsilon$. As in the undeformed case, for fixed $\varepsilon$ we call
the corresponding map the \emph{Kahan map}.  The corresponding endomorphism of the field of rational functions
$\bbF(x)$, defined for $i=1,\dots,n$ by $\Kb(x_i):=\tx i$, is called the \emph{Kahan morphism}.

We now introduce the deformed Lotka-Volterra systems which we will study. Let $\Delta=(S,A,B)$ be an augmented
graph, whose associated Poisson bracket on $\bbF(x)$ is denoted by $\PBb$. Recall that it is defined by
$\pbb{x_i,x_j}:=a_{i,j}x_ix_j+b_{i,j}$ for $i,j \in S$. Taking $H:=x_1+x_2+\cdots+x_n$ as Hamiltonian, the
Hamiltonian vector field $\Pbb{H}$ is given by the following system of differential equations:
\begin{equation}\label{eq:def_LV_syst}
\dot x_i = \sum_{j=1}^n a_{i,j} x_i x_j+c_i\;, \ \ i\in S\,,
\end{equation}
where the parameters $c_i$ are related with $B=(b_{i,j})$ by $c_i=\sum_{j=1}^n b_{i,j}$. It is a deformation of
the Lotka-Volterra system \eqref{eq:LV_gen_intro} and is of the above form~\eqref{eq:def_sys}.
Using~\eqref{eq:def_kah_impl}, the Kahan map of \eqref{eq:def_LV_syst} is implicitly given by
\begin{equation}\label{eq:def_LV_kahan}
\frac{\tx i-x_i}{\varepsilon}=\tx i\sum_{j=1}^na_{i,j}x_j+x_i\sum_{j=1}^na_{i,j}\tx j+2c_i\,, \quad i\in S\,.
\end{equation}

\begin{defn}\label{def:Kah_pois_prop_def}
  An augmented graph $\Gb$ is said to have the \emph{\KP}\ if the Kahan map \eqref{eq:def_LV_kahan} with step size
  $\varepsilon=1$ is a Poisson map with respect to the Poisson bracket $\PBb$.
\end{defn}
We first prove an analog of Proposition \ref{prp:kahan_cloning} for augmented graphs. Let $\Gamma=(S,A)$ be a
connected skew-symmetric graph with vertex set $S=\set{1,2,\dots,n}$ and let $\wght$ be a weight vector for
$\Gamma$. Let $\Delta^\wght=(S^\wght,A^\wght,B^\wght)$ be an augmented graph of $\Gamma^\wght$ and let
$\Delta=(S,A,B)$ be the augmented graph of $\Gamma$, defined for $1\leqs i,j\leqs n$ by
$b_{i,j}:=\sum_{k=1}^{\wght(i)}\sum_{\ell=1}^{\wght(j)}b^\wght_{(i,k),(j,\ell)}$; notice that, according to Lemma
\ref{lma:augmented}, this defines indeed an augmented graph of $\Gamma$. The Poisson brackets on $\bbF(x)$ and
$\bbF(y)$) associated with $\Delta$ and $\Delta^\wght$ are denoted by $\PB_b$ and $\PB_b^\wght$ respectively. The
Kahan maps on $\bbF(x)$ and $\bbF(y)$ with $\varepsilon=1$ are denoted by $\Kb$ and $\Kb^\wght$; for $u\in\bbF(x)$
and $v\in\bbF(y)$ we also write $\tilde u$ for $\Kb(u)$ and $\bar v$ for $\Kb^\wght(v)$. The decloning map is, as
before, denoted by $\Sigma$.
\begin{prop}\label{prp:kahan_cloning_defo}
  With the above definitions and notations, the following diagram of fields and field morphisms is commutative:
\begin{equation}\label{dia:kahan_diagram_2}
  \begin{tikzcd}[row sep=5ex, column sep=7ex]
    (\bbF(x),\PB_b)\arrow{r} {\Kb}\arrow[swap]{d} {\Sigma}&(\bbF(x),\PB_b)\arrow{d}{\Sigma}\\
          {(\bbF(y),\PB_b^\wght)}\arrow[swap]{r}{\Kb^\wght}&{(\bbF(y),\PB_b^\wght)}
  \end{tikzcd}
\end{equation}
  The vertical arrow $\Sigma$ is a Poisson morphism. If $K^\wght_b$ is a Poisson morphism, then $K_b$ also;
  if~$\Gamma$ is isomorphic to $\gamma\Ga n$ for some $n\in\bbN^*$ and some $\gamma\in \bbF^*$, so that $K_b$ is a
  Poisson morphism, then $\Kb^\wght$ is also a Poisson morphism.
\end{prop}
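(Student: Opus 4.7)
The plan is to follow the pattern of the proof of Proposition~\ref{prp:kahan_cloning}, treating the four claims in turn, but with a substantially new step for the last one. Commutativity of diagram~\eqref{dia:kahan_diagram_2} is obtained by summing equations~\eqref{eq:def_LV_kahan}, applied to $\Delta^\wght$, over $k=1,\dots,\wght(i)$ for fixed~$i$: combined with the identity $\sum_{k}c^\wght_{i,k}=\sum_{k,j,\ell}b^\wght_{(i,k),(j,\ell)}=\sum_{j}b_{i,j}=c_i$, this produces exactly equation~\eqref{eq:def_LV_kahan} for $\Delta$ with $\tilde x_\bullet$ replaced by $\overline{\Sigma(x_\bullet)}$, so uniqueness of the solution yields $\overline{\Sigma(x_i)}=\Sigma(\tilde x_i)$. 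The Poisson morphism property of $\Sigma$ is a direct computation:
\begin{equation*}
  \{\Sigma(x_i),\Sigma(x_j)\}_b^\wght=\sum_{k,\ell}\bigl(a_{i,j}y_{i,k}y_{j,\ell}+b^\wght_{(i,k),(j,\ell)}\bigr)=a_{i,j}\Sigma(x_i)\Sigma(x_j)+b_{i,j}=\Sigma\pbb{x_i,x_j}\;,
\end{equation*}
using the definition of $b_{i,j}$ in terms of $B^\wght$. Granting this and commutativity, the implication that $\Kb^\wght$ Poisson forces $\Kb$ Poisson is a standard diagram chase using the injectivity of $\Sigma$.

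The last assertion is where the hypothesis $\Gamma\simeq\gamma\Ga n$ enters crucially. By Example~\ref{ex:Gamma_n_clon_def}, the only pair of vertices of $\Ga n^\wght$ that can have opposite neighborhoods is $\set{(1,1),(n,1)}$, and this forces $\wght(1)=\wght(n)=1$; by Proposition~\ref{prp:comp}, the only possibly non-zero entry of $B^\wght$ (up to skew-symmetry) is therefore $b^\wght_{(1,1),(n,1)}$. Two consequences follow for $i\notin\set{1,n}$: first, $\{y_{i,k},y_{j,\ell}\}_b^\wght=a_{i,j}y_{i,k}y_{j,\ell}$ for all~$(j,\ell)$, so the ratios $y_{i,m}/y_{i,k}$ are Casimirs of $\PB_b^\wght$; second, $c^\wght_{i,k}=0$, so dividing~\eqref{eq:def_LV_kahan} by $y_{i,k}$ shows, exactly as in the proof of Proposition~\ref{prp:kahan_cloning}, that $\bar y_{i,k}/y_{i,k}$ is independent of $k$, whence these same ratios are invariants of~$\Kb^\wght$. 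For $i\in\set{1,n}$, the equality $\wght(i)=1$ makes both statements trivial since $y_{i,1}=\Sigma(x_i)$. Therefore the formula $\bar y_{i,k}=(y_{i,k}/\Sigma(x_i))\overline{\Sigma(x_i)}$ continues to hold in all cases.

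With this formula, the computation parallels~\eqref{equ:Kahan_Poisson}, now tracking a deformation term: using that $y_{i,k}/\Sigma(x_i)$ is either a Casimir of $\PB_b^\wght$ or equal to $1$, the Poisson morphism hypothesis on $\Kb$, and the commutativity of~\eqref{dia:kahan_diagram_2}, one obtains $\{\bar y_{i,k},\bar y_{j,\ell}\}_b^\wght=a_{i,j}\bar y_{i,k}\bar y_{j,\ell}+(y_{i,k}y_{j,\ell}/\Sigma(x_ix_j))b_{i,j}$. A brief case analysis, according to whether $0$, $1$ or $2$ of the indices $i,j$ lie in $\set{1,n}$, then shows that the right hand side always equals $a_{i,j}\bar y_{i,k}\bar y_{j,\ell}+b^\wght_{(i,k),(j,\ell)}$, as required. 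The main obstacle is precisely this converse direction: unlike in the undeformed case, Poisson preservation on $\bbF(x)$ does not automatically lift to $\bbF(y)$, because the deformation of $\Gamma^\wght$ could in principle differ substantially from that of $\Gamma$. It is the rigid structure of $\gamma\Ga n^\wght$, namely that the deformation is concentrated on the two singleton vertices $(1,1)$ and $(n,1)$, that prevents any interference with the cloning Casimirs $y_{i,m}/y_{i,k}$ and lets the argument go through.
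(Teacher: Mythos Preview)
Your proposal is correct and follows essentially the same approach as the paper's proof: both establish commutativity by summing the cloned Kahan equations over $k$, verify that $\Sigma$ is Poisson by the same direct computation, deduce $K_b^\wght$ Poisson $\Rightarrow$ $K_b$ Poisson from injectivity of $\Sigma$, and for the converse exploit that the only possible non-zero entry of $B^\wght$ for $\gamma\Ga n^\wght$ lies at $((1,1),(n,1))$ with $\wght(1)=\wght(n)=1$, so the cloning ratios $y_{i,m}/y_{i,k}$ remain Casimirs and $\Kb^\wght$-invariants and the computation~\eqref{equ:Kahan_Poisson} goes through with the extra constant term. The paper isolates the case $\wght(1)\cdot\wght(n)>1$ (forcing $B^\wght=0$) as trivially reducing to Proposition~\ref{prp:kahan_cloning}, while you fold it into the phrase ``this forces $\wght(1)=\wght(n)=1$''; apart from this organizational point and your writing the key identity with the explicit correction $(y_{i,k}y_{j,\ell}/\Sigma(x_ix_j))\,b_{i,j}$ rather than the paper's equivalent form~\eqref{equ:Kahan_Poisson_defo}, the arguments coincide.
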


\begin{proof}
The commutativity of the diagram is shown in exactly the same way as in the proof of
Proposition~\ref{prp:kahan_cloning}. For $1\leqslant i<j\leqslant n$,
we have that 
\begin{align*}
  \pb{\Sigma(x_i),\Sigma(x_j)}_b^\wght
  &=\sum_{k=1}^{\wght(i)}\sum_{\ell=1}^{\wght(j)}\pb{y_{i,k},y_{j,\ell}}^\wght_b
  =\sum_{k=1}^{\wght(i)}\sum_{\ell=1}^{\wght(j)}\(a_{i,j}y_{i,k}y_{j,\ell}+b_{(i,k),(j,\ell)}\)\\
  &=a_{i,j}\Sigma(x_i)\Sigma(x_j)+b_{i,j}=\Sigma\pb{x_i,x_j}_b\;,
\end{align*}
which shows that $\Sigma$ is a Poisson morphism. Using the commutativity of \eqref{dia:kahan_diagram_2} and the
fact that~$\Sigma$ is a morphism of Poisson fields, we get
$$
  \Sigma\widetilde{\pb{x_i,x_j}_b}=\overline{\Sigma\pb{x_i,x_j}_b}=\overline{\pb{\Sigma(x_i),\Sigma(x_j)}_b^\wght}
$$
and 
$$
  \Sigma\pb{\tx i,\tx j}_b=\pb{\Sigma(\tx i),\Sigma(\tx j)}^\wght_b=
  \pb{\overline{\Sigma(x_i)},\overline{\Sigma(x_j)}}^\wght_b\;.
$$
It follows that, if $K^\wght_b$ is a Poisson morphism, then $\Sigma\widetilde{\pb{x_i,x_j}_b}=\Sigma\pb{\tx i,\tx
j}_b,$ which implies, by the injectivity of $\Sigma$, that $\widetilde{\pb{x_i,x_j}_b}=\pb{\tx i,\tx j}_b$. We have
hereby shown that if $K^\wght_b$ is a Poisson morphism, then $K_b$ also. 

Suppose now that $\Gamma=\gamma\Ga n$ where $n\in\bbN^*$ and $\gamma\in \bbF^*$. It was shown in
\cite[Theorem~5.8]{PPP_Red_DC} that the Kahan map of $\Ga n$ is a Poisson map, so that $\Kb$ is a Poisson
morphism. We show that $\Kb^\wght$ is also a Poisson morphism. According to Example \ref{ex:Gamma_n_clon_def}, if
$\wght(1)\cdot\wght(n)>1$ then the only deformation matrix of $\gamma\Ga n^\wght$ is the zero matrix, and there is
nothing to prove. We therefore suppose that $\wght(1)=\wght(n)=1$ and denote
$b:=b^\wght_{(1,1),(n,1)}=-b^\wght_{(n,1),(1,1)}$, which is the only entry of $B^\wght$ which is possibly
non-zero. We show as in the proof of Proposition~\ref{prp:kahan_cloning} that, for any $i\in S$ and for any
$1\leqslant k\leqslant \wght(i)$,
\begin{equation}\label{eq:def_Casimirs}
  \Sigma(x_i)/y_{i,k} \hbox{ is a Casimir of }\PB^\wght_b \hbox{ and an invariant of } \Kb^\wght\;.
\end{equation}%
For $i=1$ or $i=n$ this is obvious because then $k=1$ and $\Sigma(x_1)=y_{1,1}$ and $\Sigma(x_n)=y_{n,1}.$ When
$1<i<n$ we have that $\pB{y_{i,k}}^\wght_b=\pB{y_{i,k}}^\wght$ and so $y_{i,m}/y_{i,k}$ and $\Sigma(x_i)/y_{i,k}$
are Casimirs of $\PB^\wght_b$ as well. The fact that $y_{i,m}/y_{i,k}$, and hence $\Sigma(x_i)/y_{i,k}$, is an
invariant of $\Kb^\wght$ for $1<i<n$ is shown in exactly the same way as in the proof of
Proposition~\ref{prp:kahan_cloning}. For $1\leqs i<j\leqslant n$, $1\leqs k\leqs\wght(i)$ and
$1\leqs\ell\leqs\wght(j)$ the computation of the Poisson brackets $\pb{\ty ik,\ty j\ell}^\wght_b$ can therefore be
done as in \eqref{equ:Kahan_Poisson}, yielding formally the same result, to wit
\begin{align}\label{equ:Kahan_Poisson_defo}
  \pb{\ty ik,\ty j\ell}^\wght_b=\bar y_{i,k}\bar y_{j,\ell}\Sigma\frac{\pb{\tx i,\tx j}_b}{\tx i\tx j}\;.
\end{align}
When $i=1$ and $j=n$, $\pb{\tx i,\tx j}_b=\pb{\tx 1,\tx n}_b=a_{1,n}\tx1\tx n+b$ and $\bar y_{i,k}=\bar
y_{1,1}=\Sigma(x_1)$ and $\bar y_{j,\ell}=y_{n,1}=\Sigma(x_n)$ since $\wght(1)=\wght(n)=1$, so that
\eqref{equ:Kahan_Poisson_defo} becomes
\begin{equation*}
  \pb{\ty 11,\ty n1}^\wght_b=a_{1,n}\ty11\ty n1+b=\overline{\pb{y_{1,1},y_{n,1}}^\wght_b}\;,
\end{equation*}
as wanted. Otherwise, $\pb{\tx i,\tx j}_b=a_{i,j}\tx i\tx j$, so that \eqref{equ:Kahan_Poisson_defo}
becomes
\begin{equation*}
  \pb{\ty ik,\ty j\ell}^\wght_b=a_{i,j}\ty ik\ty j\ell=\overline{\pb{y_{i,k},y_{j,\ell}}^\wght_b}\;,
\end{equation*}
which finishes the proof that $\Kb^\wght$ is a Poisson morphism.
\end{proof}

We use the above proposition and Theorem \ref{thm:main} to show that the \KP\ is preserved under deformation.
\begin{prop}\label{prp:kahan_deformed}
  Let $\Gamma$ be a connected skew-symmetric graph. Then the following statements are equivalent:\goodbreak
    \begin{enumerate}
      \item [(i)] $\Gamma$ has the \KP; 
      \item [(ii)] All augmented graphs $\Gb$ of $\Gamma$ have the \KP;
      \item [(iii)] Some augmented graph $\Gb$ of $\Gamma$ has the \KP.
    \end{enumerate}
\end{prop}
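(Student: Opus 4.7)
The plan is to establish the cyclic chain $(i)\Rightarrow(ii)\Rightarrow(iii)\Rightarrow(i)$, drawing on Theorem~\ref{thm:main} and Proposition~\ref{prp:kahan_cloning_defo} throughout.

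The implication $(ii)\Rightarrow(iii)$ is immediate: the zero deformation matrix $B=0$ trivially satisfies the compatibility conditions of Proposition~\ref{prp:comp}, so the trivial augmentation $(\Gamma,0)$ is always a valid augmented graph of $\Gamma$, providing a witness for $(iii)$.

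For $(i)\Rightarrow(ii)$, assume $\Gamma$ has the \KP. By Theorem~\ref{thm:main}, there exist $n\in\bbN^*$, $\gamma\in\bbF^*$, and a weight vector $\wght$ for $\Ga n$ with $\Gamma\cong\gamma\Ga n^\wght$. Given an arbitrary augmented graph $\Delta$ of $\Gamma$, transport $\Delta$ along this isomorphism to obtain an augmented graph of $\gamma\Ga n^\wght$. The second assertion of Proposition~\ref{prp:kahan_cloning_defo}, applied with base graph $\gamma\Ga n$ and weight vector $\wght$, then guarantees that the Kahan morphism $\Kb^\wght$ attached to this augmentation is a Poisson morphism. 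Hence $\Delta$ has the \KP.

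For $(iii)\Rightarrow(i)$, the most delicate direction: assume $\Delta=(\Gamma,B)$ has the \KP. The trivial augmentation $\Delta_0=(\Gamma,0)$ is always valid, and since $\PB_0=\PB$ and the deformed Kahan morphism reduces to the undeformed one at $B=0$, the \KP\ of $\Delta_0$ is literally the \KP\ of $\Gamma$. It therefore suffices to propagate the \KP\ within the one-parameter family of augmented graphs $\Delta_t:=(\Gamma,tB)$, $t\in\bbF$ (each of which is a valid augmentation since the compatibility conditions \eqref{eq:comp_cond} are linear in the entries of $B$), from $t=1$ down to $t=0$. The propagation is obtained by combining the homothety $x_i\mapsto\mu x_i$, which sends $\PB_B$ to $\PB_{\mu^2B}$ and rescales the step size, with the scaling invariance of the Kahan discretization recalled just before Proposition~\ref{prp:LVn0_KP}; tracked carefully, this transfers the \KP\ of $\Delta_1$ at step size $\varepsilon=1$ to the \KP\ of $\Delta_t$ at step size $\varepsilon=1$ for every $t\in\bbF^*$. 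Finally, the \KP\ condition is, after clearing denominators, a polynomial identity in $t$ for each coefficient in $x$; its vanishing on the infinite set $\bbF^*$ forces it to vanish identically in $t$, and in particular at $t=0$, yielding the \KP\ for $\Delta_0$ and hence for $\Gamma$.

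The main obstacle is coordinating the two scalings in the deformed setting: unlike the undeformed case, where a pure homothety of $x$ is already a Poisson automorphism of the quadratic bracket and immediately gives step-size invariance, here the homothety rescales the constant part $B$ of the bracket and so must be combined with a compensating rescaling of the step size (and possibly of $A$ via the scaling-by-$\gamma$ invariance) in order to produce, at fixed $\varepsilon=1$, a one-parameter family of augmentations on which \KP\ is known to hold.
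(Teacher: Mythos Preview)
Your arguments for $(ii)\Rightarrow(iii)$ and $(i)\Rightarrow(ii)$ match the paper's exactly: the trivial augmentation witnesses $(iii)$, and Theorem~\ref{thm:main} together with Proposition~\ref{prp:kahan_cloning_defo} gives $(i)\Rightarrow(ii)$.

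For $(iii)\Rightarrow(i)$ the paper takes a much shorter route than you do: it simply observes that $K_b$ and $\{\cdot,\cdot\}_b$ depend polynomially on the parameters $b_{i,j}$, so that specializing $b_{i,j}\to 0$ in the identity $\{K_b(x_i),K_b(x_j)\}_b=a_{i,j}K_b(x_i)K_b(x_j)+b_{i,j}$ yields $\{K(x_i),K(x_j)\}=a_{i,j}K(x_i)K(x_j)$ directly.

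Your alternative route via the family $\Delta_t=(\Gamma,tB)$ has a genuine gap in the propagation step. The homothety $x\mapsto\mu x$ carries the data $(\Gamma,B,\varepsilon)$ to $(\Gamma,\mu^2 B,\varepsilon/\mu)$: it changes $B$ and $\varepsilon$ \emph{together}, not independently. The ``scaling invariance of the Kahan discretization recalled just before Proposition~\ref{prp:LVn0_KP}'' is precisely this same homothety in the special case $B=0$, where it happens to fix $B$; for $B\neq 0$ there is no separate rescaling that moves $\varepsilon$ while keeping both $A$ and $B$ fixed. Hence the orbit of $(\Gamma,B_0,\varepsilon=1)$ under all the rescalings you have available is only the curve $\{(\Gamma,\lambda^2 B_0,\,1/\lambda):\lambda\in\bbF^*\}$, which never contains $(\Gamma,tB_0,\,1)$ for any $t\neq 1$. (That $\varepsilon$-independence in the deformed setting genuinely requires extra input is confirmed by the paper's remark immediately \emph{after} the proposition, where it is derived \emph{from} Proposition~\ref{prp:kahan_deformed}, not used to prove it.) Consequently you never establish the \KP\ for $\Delta_t$ at step size $\varepsilon=1$ on an infinite set of values of $t$, and the final polynomial-vanishing argument has no hypothesis to feed on.
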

\begin{proof}
We first prove that (i) implies (ii). If $\Gamma$ has the \KP, then we know from Theorem~\ref{thm:main} that
$\Gamma$ is isomorphic to $\gamma\Ga n^\wght$ for some $n\in\bbN^*$, some $\gamma\in \bbF^*$ and some weight vector
on~$\Ga n$. Let $\Gb$ be any augmented graph of $\Gamma$. According to Proposition \ref{prp:kahan_cloning_defo},
the Kahan morphism~$\Kb^\wght$ is also a Poisson morphism. This shows that $\Gb$ also has the \KP. The proof that
(ii) implies~(iii) is trivial, because any graph can be considered as an augmented graph of itself with the zero
deformation matrix.  Suppose now that $\Gb=(S,A,B)$ is an augmented graph of a skew-symmetric graph $\Gamma$ and
that $\Gb$ has the \KP. Let $\Kb$ be the Kahan morphism of $\Gb$ (with $\varepsilon=1$), which is a Poisson
morphism.  It is clear from Equations \eqref{eq:def_LV_kahan} and \eqref{eq:LV_Kahan} that by setting $b_{i,j}:=0$
for all $i,j\in S$ in~$\Kb$, we get the Kahan morphism $K$ of $\Gamma$ (with $\varepsilon=1$). Since~$\Kb$ is a
Poisson morphism,
$$
  \pbb{\Kb(x_i),\Kb(x_j)}=a_{i,j}\Kb(x_i)\Kb(x_j)+b_{i,j}\;, \quad \text{for all} \quad i,j\in S\;.
$$
Setting $b_{i,j}=0$, the right hand side of the above becomes $a_{i,j}K(x_i)K(x_j)$ while the left hand side becomes
$$
  \pbb{\Kb(x_i),\Kb(x_j)}\bigm|_{b=0}=\pb{\Kb(x_i),\Kb(x_j)}\bigm|_{b=0}=\pb{K(x_i),K(x_j)}\,.
$$
Therefore $\pb{K(x_i),K(x_j)}=a_{i,j}K(x_i)K(x_j)$ for all $i,j\in S$, which means that $\Gamma$ has the \KP\ as
well.
\end{proof}
In combination with Theorem~\ref{thm:main}, Proposition \ref{prp:kahan_deformed} leads at once to Theorem
\ref{thm:main_deformed}.

It also follows from the proposition (or from the theorem) by a simple rescaling argument that an augmented graph
has the \KP\ if and only if the Kahan map of an augmented graph is a Poisson map for some value of
$\varepsilon\in\bbF^*$. Indeed, when $\varepsilon$ is given weight~$-1$, while giving a weight $1$ to all $x_i$ and
a weight $2$ to the parameters $c_i$, the defining equations \eqref{eq:def_LV_kahan} of the Kahan map become
homogeneous, and homotheties of quadratic Poisson structures are Poisson maps, as we already recalled. In
particular, the Kahan map with step size $\varepsilon$ of an augmented graph $\Delta=(S,A,B)$ is a Poisson map, if
and only if the Kahan map with step size $1$ of $(S,A,\varepsilon^2B)$ is a Poisson map, i.e.,
$(S,A,\varepsilon^2B)$ has the KP property; in view of Proposition~\ref{prp:kahan_deformed}, this is equivalent to
$(S,A,B)$ having the \KP.

Another consequence of the proposition is that the final statement in Proposition \ref{prp:kahan_cloning_defo} can
be reformulated as an if and only if, so that Proposition \ref{prp:kahan_cloning_defo} is a generalization of
Proposition~\ref{prp:kahan_cloning}. Indeed, when $\Kb$ is a Poisson morphism, so that $\Delta$ has the \KP, then
$\Gamma$ has the \KP\ (by Proposition \ref{prp:kahan_deformed}), hence also $\Gamma^\wght$ (by
Proposition~\ref{prp:kahan_cloning}), and hence also $\Delta^\wght$ (again by Proposition
\ref{prp:kahan_deformed}), so that $\Kb^\wght$ is a Poisson morphism. It follows that, in the notations of
Proposition \ref{prp:kahan_cloning_defo}, $\Kb$ is a Poisson morphism if and only $\Kb^\wght$ is a Poisson
morphism.

\section{Liouville and superintegrability}\label{sec:integ}

We have shown in the previous sections that the only connected skew-symmetric graphs $\Gamma$ which have
the~\KP\ are of the form $\Gamma=\gamma\Ga n^\wght$, where $\gamma\in\bbF^*$, $n\in\bbN^*$ and $\wght$ is a weight
vector on $\Ga n$; also, that the only augmented graphs $\Delta$ which have the \KP\ are augmented graphs of
$\Gamma=\gamma\Ga n^\wght$. We now show that the corresponding Lotka-Volterra systems $\LV(\Gamma)$ and deformed
Lotka-Volterra systems $\LV(\Delta)$ are both Liouville integrable and superintegrable, and that their Kahan
discretizations are both Liouville integrable and superintegrable as well.  By a simple rescaling argument, already
used several times above, we may assume that $\ga=1$, so we will consider in what follows only $\Ga n^\wght$ and
its augmentations.

We first fix the notation and the context. Let $n\in\bbN^*$, let $\Ga n^\wght$ be a cloned graph of~$\Ga n$, with
$\wght(1)=\wght(n)=1$, and let $b\in\bbF$. We denote by $\Delta_n$ the augmented graph of~$\Ga n$, where the
deformation matrix $B$ has as only possible non-zero entries $b_{1,n}=-b_{n,1}:=b$. Similarly,~$\Delta_n^\wght$
denotes the augmented graph of $\Ga n^\wght$, where the deformation matrix $B$ has as only possible non-zero
entries $b_{(1,1),(n,1)}=-b_{(n,1),(1,1)}:=b$.  We consider the fields $\bbF(x)$ and $\bbF(y)$, where
$x=(x_1,\dots,x_n)$ and $y=(y_{1,1},y_{2,1},y_{2,2},\dots, y_{n,1})$, as before.
We consider on $\bbF(x)$ the Poisson bracket $\PB_b$, associated with $\Delta_n$, and on $\bbF(y)$ the Poisson
bracket $\PB_b^\wght$, associated with~$\Delta_n^\wght$.


Since we will take in this section a more geometrical point of view, we view $\bbF(x)$ and $\bbF(y)$ as the field of
(rational) functions on $\bbF^n$, respectively on $\bbF^{|\wght|}$; the Poisson structure on $\bbF^n$ and
on~$\bbF^{|\wght|}$ corresponding to $\PB_b$ and $\PB_b^\wght$ will respectively be denoted by $\pi_b$ and
$\pi_b^\wght$.  The standard Lotka-Volterra Hamiltonians on $\bbF^n$ and on $\bbF^{|\wght|}$, which are always the
sum of all coordinates, are denoted by $H$ and $H^\wght$.

In order to show the integrability of the Hamiltonian system $(\bbF^{|\wght|},\pi_b^\wght,H^\wght)$ and of its
Kahan discretization, we first recall from \cite{PPP_Red_DC} the integrability of the deformed Lotka-Volterra
system $(\bbF^n,\pi_b,H)$ and its Kahan discretization. For $1\leqs\ell\leqs \left[\frac{n-1}2\right]$, consider
the following rational functions:
\begin{equation}\label{E:integrals}
  F_\ell:= \left\{
  \begin{array}{ll}
    \left(x_1+x_2+\cdots+x_{2\ell-1}+\frac{\beta}{x_n}\right)
    \displaystyle\frac{x_{2\ell+1}x_{2\ell+3}\ldots x_{n}}{x_{2\ell}x_{2\ell+2}\ldots x_{n-1}}\;,&
    \mbox{ if}\  n\mbox{ is odd},\\
    \\
    \left(x_1+x_2+\cdots+x_{2\ell}+\frac{\beta}{x_n}\right)
    \displaystyle\frac{x_{2\ell+2}x_{2\ell+4}\ldots x_n}{x_{2\ell+1} x_{2\ell+3}\ldots x_{n-1}}\;,&
   \mbox{ if} \  n\mbox{ is even},
  \end{array} \right.
\end{equation}
and let $G_\ell:=\jmath(F_\ell)$, where $\jmath:\bbF(x)\to\bbF(x)$ is the involutive field automorphism, defined by
$\jmath(x_i):=x_{n+1-i}$, for $i=1,\dots,n$. Together with the Hamiltonian $H$, this yields exactly $n-1$ different
rational functions: for example, when~$n$ is odd then all $F_\ell$ and $G_\ell$ are different, except for
$F_1=G_1$.  The following facts were obtained in \cite{PPP_Red_DC}:
\begin{enumerate}
  \item The $n-1$ rational functions $F_\ell$, $G_\ell$ and $H$ are first integrals of $(\bbF^n,\pi_b,H)$;
  \item They are \emph{independent}, i.e., their differentials are independent on an open dense subset of $\bbF^n$;
  \item The rank $\Rk\pi_b$ of the Poisson structure $\pi_b$ is $n$ when $n$ is even, otherwise it is $n-1$;
  \item The first integrals $F_\ell$ are \emph{in involution}, i.e., commute for the Poisson bracket;
  \item The first integrals $F_\ell, G_\ell$ and $H$ are invariants of the Kahan discretization of $(\bbF^n,\pi_b,H)$.
\end{enumerate}
Items (1) and (2) say that $(\bbF^n,\pi_b,H)$ is \emph{superintegrable}, i.e., has $n-1$ independent first
integrals, where $n$ is the dimension of the ambient space. The items (1) -- (4) imply that the independent first
integrals $F_\ell$ are in involution and that, with the Hamiltonian, their number is $n-\frac12\Rk\pi_b$, which is
exactly the number required for the \emph{Liouville integrability} of $(\bbF^n,\pi_b,H)$; for example, when~$n$ is
odd, $\Rk\pi_b=n-1$ and we have $(n+1)/2$ functions $F_1,\dots,F_{(n-1)/2},H$ which are in involution. Combined
with (5) and the fact that the Kahan map is a Poisson map one gets from it that the Kahan discretization of
$(\bbF^n,\pi_b,H)$ is both superintegrable and Liouville integrable, with as invariants the first integrals of the
continuous system.

We use these five properties, the properties of the decloning map and of the Poisson structure~$\pi_b^\wght$ to
prove the integrability the Hamiltonian system $(\bbF^{|\wght|},\pi_b^\wght,H^\wght)$ and its Kahan
discretization. As we have already seen in the proof of Proposition \ref{prp:kahan_cloning_defo}, for any
$1\leqslant i\leqs n$ and $1<k\leqslant\wght(i)$, $y_{i,k}/y_{i,1}$ is a Casimir function of $\pi_b^\wght$,
yielding $|\wght|-n$ different Casimir functions, which are clearly independent. The rank of $\pi_b^\wght$ is
therefore at most $n$, i.e., at most $n-1$ when~$n$ is odd, and at most $n$ when $n$ is even. In fact, we have
equality. To see this, consider the \emph{decloning map} $S:\bbF^{|\wght|}\to\bbF^n$ corresponding to decloning
morphism $\Sigma$, i.e., $\Sigma=S^*$, which is a dominant (actually surjective) Poisson map, since $\Sigma$ is an
injective Poisson morphism. It follows that the rank of $\pi_b^\wght$ is also bounded from below by the rank of
$\pi_b$, which is $n-1$ when $n$ is odd, and $n$ when $n$ is even, and so we have equality.

The cited properties of the decloning map $S$ also imply that the $n-1$ functions $S^*(F_\ell)$, $S^*(G_\ell)$ and
$H^\wght$ are independent first integrals of $(\bbF^{|\wght|},\pi_b^\wght,H^\wght)$ and that they are in
involution. Combined with the Casimir functions we get $n-1+|\wght|-n=|\wght|-1$ different functions; from the
simple form of the Casimir functions, it is clear that the former first integrals are independent from the Casimir
functions, so that we have $|\wght|-1=\dim\bbF^{|\wght|}-1$ independent first integrals, which shows that
$(\bbF^{|\wght|},\pi_b^\wght,H^\wght)$ is superintegrable.

Since $S$ is a Poisson map, the independent functions $S^*(F_\ell)$ and $H^\wght=S^*(H)$ are in involution;
together with the $|\wght|-n$ Casimir functions, we get $(n+1)/2+|\wght|-n= |\wght|-\frac12\Rk\pi_b^\wght$
independent functions, including the Hamiltonian, in involution, which shows that
$(\bbF^{|\wght|},\pi_b^\wght,H^\wght)$ is Liouville integrable.

Moreover, the $|\wght|-1$ independent first integrals are invariants of $\Kb^\wght$. For the Casimir functions
$y_{i,k}/y_{i,1}$, we have already seen this in the proof of Proposition \ref{prp:kahan_cloning_defo}. For
$1\leqs\ell\leqs \left[\frac{n-1}2\right]$, the commutativity of \eqref{dia:kahan_diagram_2} implies that
$$
  \Kb^\wght(\Sigma(F_\ell))=\Sigma(\Kb(F_\ell)) = \Sigma(F_\ell)\;,
$$
where we have used in the last step that $F_\ell$ is an invariant of $\Kb$, $\Kb(F_\ell)=F_\ell$. Since $H^\wght$ is
linear, it is also an invariant of $\Kb^\wght$.

Summing up, and combined with the integrability results in the non-deformed case, it leads to the following
proposition.

\begin{prop}
  Let $n\in\bbN^*$ and let $\wght$ be any weight vector on $\Ga n$. Suppose that $\Delta_n^\wght$ is any augmented
  graph of $\Ga n^\wght$.
  \begin{enumerate}
    \item The Lotka-Volterra system $\LV(\Ga n^\wght)$ and its Kahan discretization are superintegrable and
      Liouville integrable;
    \item The deformed Lotka-Volterra system $\LV(\Delta_n^\wght)$ and its Kahan discretization are superintegrable
      and Liouville integrable.
  \end{enumerate}\qed
\end{prop}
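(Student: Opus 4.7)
The plan is to deduce both statements simultaneously, treating (1) as the special case $b=0$ of (2), and to obtain the integrability on $\bbF^{|\wght|}$ by transporting the known integrability on $\bbF^n$ through the decloning map $S:\bbF^{|\wght|}\to\bbF^n$ associated with $\Sigma$. The ingredients from the uncloned deformed system $(\bbF^n,\pi_b,H)$ are already recalled: the $n-1$ independent first integrals $F_\ell, G_\ell, H$, their involutivity (for the $F_\ell$), the rank of $\pi_b$, and their invariance under $\Kb$. Because $\Sigma$ is an injective Poisson morphism, $S$ is a dominant (in fact surjective) Poisson map, which is the mechanism I would use to transport everything upstairs.

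For the continuous system $(\bbF^{|\wght|},\pi_b^\wght,H^\wght)$, I would proceed in four steps. First, assemble the $|\wght|-n$ Casimirs $y_{i,k}/y_{i,1}$ (for $1\leqs i\leqs n$ and $1<k\leqs\wght(i)$) identified in the proof of Proposition~\ref{prp:kahan_cloning_defo}; these are manifestly independent. Second, use the fact that $S$ is a surjective Poisson map to get $\Rk\pi_b^\wght\geqs\Rk\pi_b$, while the Casimirs force $\Rk\pi_b^\wght\leqs n$, so $\Rk\pi_b^\wght=\Rk\pi_b$. Third, pull the functions $F_\ell, G_\ell, H$ back through $S$ to obtain $n-1$ first integrals $S^*F_\ell, S^*G_\ell, H^\wght$ of $\pi_b^\wght$; since these depend only on the sums $\Sigma(x_i)=\sum_k y_{i,k}$ whereas the Casimirs are ratios $y_{i,k}/y_{i,1}$, the Jacobian splits into a block form that makes the total collection of $|\wght|-1$ functions independent on a dense open subset, giving superintegrability. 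Fourth, for Liouville integrability, the identity $\{S^*F_\ell,S^*F_m\}_b^\wght=S^*\{F_\ell,F_m\}_b=0$ shows that the pulled-back $F_\ell$ (and $H^\wght$) remain in involution; together with the Casimirs one obtains $\lceil n/2\rceil+(|\wght|-n)=|\wght|-\tfrac12\Rk\pi_b^\wght$ independent Poisson-commuting functions.

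For the Kahan discretization, I would invoke the commutative diagram of Proposition~\ref{prp:kahan_cloning_defo}: from $\Kb^\wght\circ\Sigma=\Sigma\circ\Kb$ and the known invariance $\Kb(F_\ell)=F_\ell$ one gets $\Kb^\wght(S^*F_\ell)=S^*F_\ell$, and likewise for $G_\ell$; linearity of $H^\wght$ gives its invariance, and the Casimirs $y_{i,k}/y_{i,1}$ were already shown to be invariants of $\Kb^\wght$ in that same proof. Since $\Kb^\wght$ is a Poisson morphism by Theorem~\ref{thm:main_deformed}, the same $|\wght|-1$ functions provide both the superintegrability and the Liouville integrability of the discrete system, with the same involutivity as in the continuous case. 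The main obstacle I anticipate is the rigorous verification of independence of the combined collection $\{S^*F_\ell, S^*G_\ell, H^\wght\}\cup\{y_{i,k}/y_{i,1}\}$; this hinges on the block-triangular structure of the Jacobian in the coordinates $\{y_{i,1}\}\cup\{y_{i,k}/y_{i,1}:k>1\}$ and on the already-established independence of $F_\ell, G_\ell, H$ in the uncloned system, so the argument is essentially bookkeeping but worth writing out with care at a generic point.
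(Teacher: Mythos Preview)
Your proposal is correct and follows essentially the same route as the paper's own argument: transport the known first integrals $F_\ell,G_\ell,H$ of $(\bbF^n,\pi_b,H)$ through the surjective Poisson map $S$, adjoin the $|\wght|-n$ Casimirs $y_{i,k}/y_{i,1}$, determine $\Rk\pi_b^\wght$ by the two-sided bound, and use the commutative diagram \eqref{dia:kahan_diagram_2} for the Kahan invariance. Your block-Jacobian remark for independence is slightly more explicit than the paper, which simply says this is clear from the form of the Casimirs; otherwise the arguments coincide.
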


The Lotka-Volterra and deformed Lotka-Volterra systems having Kahan discretizations which are integrable with
respect to the original Poisson structure are therefore characterized by the \KP.
%
%

\bibliographystyle{abbrv} 

\end{document}